\newenvironment{tab}{\begin{tabbing}
MMMMM\=aaa\=aaa\=aaa\=aaa\=aaa\=aaa\= \kill}{\end{tabbing}}
\def\refmystepcounter#1{\stepcounter{#1}\protect\gdef 
\@currentlabel {\csname p@#1\endcsname \csname 
the#1\endcsname}}
\newcounter {tabnr}
\newenvironment{tabn}{\begin{tabbing}
\refmystepcounter{tabnr}
MMMMM\=aaa\=aaa\=aaa\=aaa\=aaa\=aaa\= \kill
(\arabic{tabnr})~}{\end{tabbing}}
\newtheorem{theorem}{Theorem}
\newenvironment{proof}{\noindent\emph{Proof.}}{\boks}
\newenvironment{remark}{\medbreak\noindent\emph{Remark.}}{\boks}
\def\sem #1{\hbox{$[\![\, #1\, ]\!] $}}
\def\boks  {\mbox{$\Box$}}
\def\Nat   {\mbox{$\mathbb{N}$}}
\def\bar   {\mbox{$\,[ \! ]\,$}}
\def\all   {\forall\;}
\def\ex    {\exists\;}
\def\S #1/{\mbox {\textsl{#1}}}
\def\B #1/{\mbox {\textbf{#1}}}
\def\R #1/{\mbox {\textrm{#1}}}
\def\T #1/{\mbox {\texttt{#1}}}
\def\phi   {{\mbox{$\varphi$}}}
\def\Implies{\;\Rightarrow\;}
\def\EQ     {\mbox{\quad$\equiv$\quad}}
\def\Land   {\mbox{ $\;\land\;$ }}
\def\Lor    {\;\lor\;}
\def\TO     {\mbox{$\quad\to\quad$}}
\def\true   {\S true/}
\def\false  {\S false/}
\def\IS     {\mbox{$\quad =\quad $}}
\def\sbreak {\smallbreak\noindent}
\def\bbreak {\bigbreak\noindent}
\def\bol    {\mbox{$\bullet$}}
\begin{document}

\title {A distributed resource allocation algorithm\\
    for many processes}
\author {Wim H. Hesselink \ (whh469)\\
  University of Groningen, The Netherlands\\
  w.h.hesselink@rug.nl
}
\maketitle

\setcounter{tabnr}{-1}

\begin{abstract}
\noindent
Resource allocation is the problem that a process may enter a critical
section \S CS/ of its code only when its resource requirements are not
in conflict with those of other processes in their critical sections.
For each execution of \S CS/, these requirements are given anew. In
the resource requirements, levels can be distinguished, such as
e.g. read access or write access.  We allow infinitely many processes
that communicate by reliable asynchronous messages and have finite
memory.  A simple starvation-free solution is presented.  Processes
only wait for one another when they have conflicting resource
requirements.  The correctness of the solution is argued with
invariants and temporal logic. It has been verified with the proof
assistant PVS.
\end{abstract}

\bbreak \B Key words:/ distributed algorithms; resource allocation;
drinking philosophers; readers/writers problem; verification;
starvation freedom; fairness

\section {Introduction}

Resource allocation is a problem that goes back to Dijkstra's dining
philosophers \cite{Dij71} and the drinking philosophers of Chandi and
Misra \cite{ChM84}. It is the problem that a process may enter a
critical section of its code only when its resource requirements are
not in conflict with those of other processes in their critical
sections. In the case of the dining philosophers, the philosophers
form a ring and the resource requirements are two forks shared with
the neighbours in the ring.  In the drinking philosophers problem the
philosophers form an arbitrary finite undirected graph.

In the general resource allocation problem, there is a number of
processes that from time to time need to execute a critical section \S
CS/ in which they need access to some resources.  For every critical
section of every process, the resource requirements may be different.
Processes that are concurrently in a critical section, must have
compatible resource requirements.  The processes must therefore
communicate with possible competitors, and possibly wait for
conflicting processes before entering \S CS/. On the other hand,
unnecessary waiting must be avoided.

\subsection{Setting and sketch of solution}

We present a solution for a setting with infinitely many processes
that have private memory and communicate by asynchronous messages.
The processes and messages are assumed to be reliable: the processes
never crash, the mesages are guaranteed to arrive and be handled, but
the delay is unknown.  Messages are not lost, damaged, or duplicated.
They can pass each other, however, unlike in \cite{AwS90,Lyn96} where
the messages in transit from one sender to one receiver are treated
first-in-first-out.  Every process can send messages to every other
process, and receive messages from it.  The processes receive and
answer messages even when they are idle.

The resource requirements can be sets of resources the processes need
exclusive access to.  More generally, however, resource requirements
may comprise, e.g., read access or write access to some data, where
concurrent reading is allowed while writing requires exclusive access.

In our solution, we deal with the infinitely many processes by
splitting the problem in two parts: one part to ensure that a process
has, for every job, only a finite set of potential competitors, its
dynamic neighbourhood, the other part to use this neighbourhood to
ensure partial mutual exclusion.

The two parts interact mildly.  We call the first part the
\emph{registration algorithm}, because it is based on the idea that
processes need to register for resources.  The second part is called
the \emph{central algorithm} because it satisfies the functional
requirements.  If the neighbourhoods can be kept constant, the
registration algorithm can be removed, and the central algorithm can
be compared with the drinking philosophers \cite{ChM84}.

When a process gets a new job, it first registers to obtain a list of
potential competitors.  If in doing this it extends its registrations,
it may need to contact some of these competitors before proceeding.
Then, the central algorithm takes over.  This first lets the process
wait for conflicting processes that are currently competing for \S
CS/, and then allows it in the competition for \S CS/.  A process with
many registrations has much communication to perform.  We therefore
offer the processes the option to withdraw registrations, in some
degree concurrently with resource acquisition.

The algorithm has two kinds of waiting conditions: waiting for
messages in transit to arrive, and waiting for conflicting processes
to proceed.  It is not our aim to minimize the waiting time.  We offer
a simple solution with as much nondeterminism as possible and as much
progress as we can accommodate in view of the safety requirements.

\subsection{Overview and verification}

We briefly discuss related research in Section \ref{relatedresearch}.
In Section \ref{async}, we describe the model and the notations for
message passing. 

Section \ref{intromain} presents the algorithms.  Section
\ref{algorithm} contains its proof of safety.  In Section
\ref{nodead}, we introduce and formalize progress, and prove that weak
fairness guarantees progress for the registration algorithm.  In
Section \ref{proofThm1}, we define, formalize, and prove progress of
the central algorithm in a form that combines starvation freedom and
concurrency.  We discuss message complexity and waiting times in
Section \ref{summ_alg}, and conclude in Section \ref {conclusion}.

The proofs of the safety and liveness properties have been carried out
with the interactive proof assistant PVS \cite{OSR01}.  The
descriptions of proofs closely follow our PVS proof scripts, which can
be found on our web site \cite{whh_distrRscAlloc}.  It is our
intention that the paper can be read independently, but the proofs
require so many case distinctions that manual verification is
problematic.

\subsection{Related research} \label{relatedresearch}

The readers/writers problem \cite{And00} goes back to Courtois,
Heymans, and Parnas \cite{CHP71}, in the context of shared memory
systems with semaphores.  We are not aware of solutions for systems
with message passing.

In the drinking philosophers' problems of \cite{ChM84,Lyn96,WeL93},
the philosophers form a fixed finite undirected graph.  In this case,
the set $\S nbh/.p$ of possibly conflicting processes of a process $p$
is a subset of the constant set $\S Nbh/.p$ of $p$'s neighbours in the
graph.  This subset is chosen nondeterministically when the process
becomes ``thirsty''. The message complexity of the solutions in these
papers is proportional to the size of $\S Nbh/.p$ and not to the
possibly considerably smaller size of $\S nbh/.p$.  This is a
disadvantage for cases with large complete graphs. To enforce
starvation freedom, these solutions assign directions to the edges of
the graph such that the resulting directed graph is acyclic.

Much work has been done to minimize the response time
\cite{AwS90,ChS92,PJC93,Rhe98,WeL93}. For instance, the paper
\cite{WeL93} offers the possibility of a waiting time that is constant
and not proportional to the (in our case unbounded) number of
processes. It does so by means of the algorithm of \cite{Lyn81}, which
uses a linear ordering of the resources adapted to a fixed netwerk
topology.  The papers \cite{Rhe98,WeL93} offer modular approaches to
the general resource allocation problem.

Another important performance aspect is robustness against
failures. The paper \cite{ChS92} introduces the measure of failure
locality, see also \cite{SPS00}. The paper \cite{DGR05} concentrates
on self-stabilization, while imposing specific conditions on the
resource requirements.

As far as we can see the only papers that treat the dynamic resource
allocation problem that allows conflicts between arbitrary pairs of
processes are \cite{AwS90,Rhe98}.  The algorithm of \cite{AwS90}
ignores the resources and takes the conflicts as given. Whenever two
processes have conflicting jobs, at least one of the two is activated
with knowledge of the conflict.  The emphasis of \cite{AwS90,Rhe98} is
on minimizing the response time.  The algorithms are more complicated
and need more messages than our solution.  The paper \cite{Rhe98} uses
resource managers.

\subsection{Asynchronous messages} \label{async}

The processes communicate by reliable asynchronous messages, i.e., the
messages are not corrupted, lost, or duplicated.  They may, however,
pass each other.

Every message has a message key, a sender, and a unique destination.
It may have a value.  As in CSP \cite{Hoa85}, we write $m.q.r\,!$ for
the command for $q$ to send a message with key $m$ to destination $r$,
and $m.q.r\,?$ for the command for $r$ to receive this message.
Unlike CSP, the messages are asynchronous.  We write $m.q.r\,!\,v$ for
the command to send a message with key $m$ and value $v$ from $q$ to
$r$, and $m.q.r\,?\,v$ for the command to receive message $m$ and
assign its value to $v$.

In the algorithm, for every message key $m$, and for every source $q$
and destination $r$, there is never more than one message in transit
from $q$ to $r$. Therefore, e.g., in Promela, the language of the
model checker Spin \cite{Hol04}, one could model the messages by
channels with buffer size 1.  

For the correctness of the algorithm, the time needed for message
transfer can be unbounded. For other issues, however, it is convenient
to postulate an upper bound $\Delta$ for the time needed to execute an
atomic command plus the time that the messages sent in this command
are in transit.  Similarly, when discussing progress, we assume that
the execution time of the critical sections is bounded by $\Gamma$.

\section {The  Algorithm} \label{intromain}
\label{fifoalg}

Section \ref{funcspec} contains the functional specification of the
complete algorithm, and separates the responsibilities of the central
algorithm and the registration algorithm.   Progress requirements are
discussed in Section \ref {progress}.  

The central algorithm is sketched in Section \ref{sketch}. Section
\ref{code} contains the code of the central algorithm and discusses
some global aspects.  In Section \ref{layered}, we present its design
as a layered algorithm. 

In Section \ref{jobmodel}, we develop a job model as a preparation for
the registration algorithm presented in Section \ref{queryalg}.
Section \ref{s.abort} contains commands to abort the entry protocol.

The entire algorithm is presented in this section without
verification.  The verification is postponed to Section
\ref{algorithm}.  Yet, we designed the algorithm concurrently with the
verification, because that is for us the only way to obtain a reliable
algorithm.  We separate the two aspects here for the ease of reading.

\subsection{Functional specification} \label{funcspec}\label{safetyreq}

When a process gets a new job to execute in a critical section, the
associated resource requirements are chosen nondeterministically, say
by a (distributed) environment.  We use the term \emph{job} for these
resource requirements, and we use the type \S Job/ for all possible
jobs.  The value $\S none/:\S Job/$ represents the absence of resource
requirements.  We give every process $q$ a private variable $\S job/.q
:\S Job/$, which is initially \S none/, but which is modified by the
environment when it gives the process a job.

Processes concurrently in the critical section \S CS/ must have
\emph{compatible} jobs.  We write $u*v$ to express that the jobs $u$
and $v$ are compatible.  We assume that compatibility satisfies the
axioms that $u*\S none/\equiv \true$ and $u*v\equiv v*u$ for all jobs
$u$ and $v$.  Examples of compatibility relations are given in Section
\ref{jobmodel}.

The problem of \emph{resource allocation} is thus to ensure that
processes with $\S job/\ne\S none/$ eventually enter \S CS/, under
the safety requirement that, when two different processes are both in
\S CS/, their jobs are compatible:
\begin{tab}
\S Rq0:/ \> $q\B\ in /\S CS/\Land r\B\ in /\S CS/ \Implies q = r \Lor
\S job/.q * \S job/.r $ .
\end{tab}
Here and henceforth, $q$ and $r$ stand for processes.  For all
invariants, we implicitly universally quantify over the free
variables, usually $q$ and $r$.  If \S v/ is a private variable,
outside the code, the value of \S v/ for process $q$ is denoted by $\S
v/.q$.

We speak of a \emph{conflict} between $q$ and $r$ when $q$ and $r$
have incompatible jobs: $\S job/.q * \S job/.r \equiv \false $.
Clearly, the conflict relation is time dependent.  Condition \S Rq0/
says that conflicting processes are never concurrently in their
critical sections.  For comparison, mutual exclusion itself would be
the requirement that $ q\B\ in /\S CS/$ and $r\B\ in /\S CS/$ implies
$ q = r $.

While we allow potentially infinitely many processes, we cannot expect
a process to communicate with infinitely many competitors.  We
therefore introduce a \emph{registration algorithm} that uses the new
job of a process $q$ to provide $q$ with a finite set $\S nbh0/.q$ of
potential competitors.  Every process not in $\S nbh0/.q$ must not in
conflict with $q$.  In other words, the registration algorithm serves
to guarantee the invariant:
\begin{tab}
\S Rq1:/ \> $ q\B\ in /\S CS/ \Land r\B\ in /\S CS/ \Implies
q = r \Lor r\in\S nbh0/.q \Lor \S job/.q * \S job/.r $ .
\end{tab}
The \emph{central algorithm} uses the sets $\S nbh0/.p$ to guarantee
the invariant:
\begin{tab}
\S Rq2:/ \> $ q\B\ in /\S CS/\Land r\B\ in /\S CS/ \Land 
r\in\S nbh0/.q\Land q\in\S nbh0/.r \Implies 
\S job/.q * \S job/.r $ .
\end{tab}
Predicate \S Rq0/ follows from \S Rq1/ and \S Rq2/, using the symmetry
of the operation $*$.  The reason for the name \S nbh0/ is that the
algorithm also uses a closely related variable \S nbh/ which does not
satisfy \S Rq2/.

As a process may have to wait a long time before it can enter \S CS/,
it may be useful that the environment of a process can
nondeterministically abort its entry protocol and move it back to the
idle state.  This option is offered in Section \ref{s.abort}.

\subsection{Progress requirements} \label{progress}

The first progress requirement that comes to mind, is \emph{starvation
  freedom} \cite{ewd651}, also called lockout freedom \cite{Lyn96}.
This means that every process that needs to enter \S
CS/, will eventually do so unless its entry protocol is aborted.

While starvation freedom is important, resource allocation has a
second requirement, viz.\ that no process is hindered unnecessarily.
This property is called \emph{concurrency} in \cite{ChM84,Rhe98}.  It
means that every process that needs to enter \S CS/ and does not
abort, will eventually enter \S CS/, unless it comes in eternal
conflict with some other process (a kind of deadlock).

Of course, concurrency follows from starvation freedom.  We introduce
both, however, because concurrency needs a weaker liveness assumption
than starvation freedom.  The liveness assumptions needed are forms of
weak fairness.  Weak fairness for process $p$ means that if, from some
time onward, process $p$ is continuously enabled to do a step
different from aborting, it will do the step. Weak fairness for
messages $m$ from $q$ to $r$ means that every message $m$ in transit
from $q$ to $r$ will eventually arrive.  Weak fairness is a natural
assumption, and some form of it is clearly needed.  We come back to
this in Section \ref{intro-wf}.

Our algorithm satisfies starvation freedom under the assumption of
weak fairness for all processes and all messages.  Concurrency for
process $p$ only needs weak fairness for $p$ itself and weak fairness
for all messages from and to $p$.  The point is that progress of
process $p$ is not hindered by unfair processes without conflicts with
$p$; e.g., such processes are allowed to remain in \S CS/ forever.  In
Section \ref{s.liveness}, we unify starvation freedom and concurrency
in a single progress property called absence of localized starvation.

\subsection{Sketch of the central algorithm} \label{sketch}

The central algorithm thus works under the assumption that, for every
\S CS/ execution, the process obtains a finite set \S nbh/ of
potential competitors with the guarantee that the \S CS/ execution
does not lead to a conflict with other processes.  The elements of \S
nbh/ are called the neighbours of the process.  We do not yet
distinguish \S nbh/ and \S nbh0/.

Inspired by the shared-variable mutual exclusion algorithm of
\cite{LyH91}, the central algorithm is designed in three layers: an
outer protocol to communicate the job to all neighbours, a middle
layer to guarantee starvation freedom and to guard against known
conflicts, and an inner protocol to guarantee mutual exclusion.

The inner protocol is the competition for \S CS/.  It uses process
numbers for tie breaking, just as Lamport's Bakery algorithm
\cite{Lam74}. We therefore represent the processes by natural
numbers. If $q$ and $r$ are processes with $q<r$, we speak of $q$ as
the lower process and $r$ as the higher process.  For every pair of
processes, the inner protocol gives priority to the lower process, and
it lets the higher process determine compatibility.  As processes can
enter the inner protocol concurrently only within the margins allowed
by the middle layer, we expect that the priority bias of the inner
protocol is not very noticeable unless the load is so heavy that the
performance of any algorithm would be problematic.

Despite the three layers, the central algorithm is rather simple.  The
outer protocol uses three messages for every process in \S nbh/.  The
middle layer needs no additional messages.  The inner protocol uses
one message for every higher process in \S nbh/, and no messages for
the lower processes in \S nbh/.

\subsection{Into the code of the central algorithm} \label{code}

The code of the central algorithm is given in Figure \ref{fifo}.  For
an unbroken flow of control, we include the lines 22 and 23, which
belong to the registration algorithm.

Every process $p$ has a program counter $\S pc/.p:\Nat$, which is
initially 21.  For a process $p$ and a line number $\ell$, we write
$p\B\ at / \ell$ to express $\S pc/.p = \ell$. If $L$ is a set of line
numbers, we write $p\B\ in / L$ to express $\S pc/.p \in L$.

Every process, say $p$, is in an infinite loop, in the line numbers 21
up to 28. It is at line 21 if and only if it is idle. Independently of
its line number, it is always able and willing to receive messages
from any other process, say $q$.  In other words, in Figure
\ref{fifo}, the eight alternatives of \T central/ are interleaved with
the six alternatives of \B receive/.

We regard the 14 alternatives of Figure \ref{fifo} as atomic commands.
This is allowed because actions on private variables give no
interference, the messages are asynchronous, and any delay in sending
a message can be regarded as a delay in message delivery.

One may regard the receiver as an independent thread of the looping
process, with the guarantee that the 14 alternatives are executed
atomically.  Alternatively, an implementer may decide to schedule the
receiver only when the process is idle (at line 21) or waiting at
lines 22, 23, 24, 25, 26.

\begin{figure}[t]
\begin{tab}
  \> $ \B central/(p): $ \+\\
  \B loop/\\
  21: \>\> $ \B choose /\S job/\ne\S none/ $ ;\\
  22: \>\> $ \B await /\S pcr/ \leq 32 $ ;\\
  \>\> $ \S curlist/:= \{ s \mid L(\S job/)(s) > 0\} $ ;\\
  \>\> $ \B for all /s\in\S curlist/\B\ do /
  \T asklist/.p.s\:!\:L(\S job/)(s) \B\ od/ $ ;\\
  23: \>\> $ \B await /\S curlist/ = \emptyset $ ;
  // \S nbh/ has been formed.\\
  \>\> $ \B for all /q\in\S pack/ \B\ do /\T hello/.p.q\:!\B\ od/ $ ;\\
  24: \>\> $ \B await /\S pack/ \cup \S wack/ = \emptyset $ ;\\
  \>\> $ \S prio/ := \{q \mid q\notin \S after/ \Land
  \neg\, \S job/*\S copy/(q) \} $ ;\\
  25: \>\> $ \B await / \S prio/ = \emptyset $ ; $ \S nbh0/ := \S nbh/ $ ;\\
  \>\> $ \B for all /q\in\S nbh/ \B\ do /\T notify/.p.q\:!\:\S job/\B\ od/ $ ;\\
  \>\> $ \S need/:= \{q\in\S nbh/\mid p < q \lor (q\in \S away/
  \land \neg\,\S job/*\S copy/(q)) \} $ ;\\
  26: \>\> $ \B await / \S need/ = \emptyset $ ;\\
  27: \>\> $ \S CS/ $ ;\\
  28: \>\> $ \B for all /q\in\S nbh/ \B\ do /\T withdraw/.p.q\:!\B\ od/ $ ;\\
  \>\> $ \S wack/ := \S nbh/ $ ;
  $ \S job/ := \S none/ $ ; $\S nbh/ := \S nbh0/ := \emptyset $ ;\\
  \B endloop/ .
  \\
  \\
  $ \B receive/(p) \B\ from/(q): $\+ \\

  $ \bar $\> $  \T notify/.q.p \:?\:\S copy/(q) $ ;\\
  \> $ \B if /q < p \B\ then / \B add /q \B\ to / \S prom/\B\ endif/ $ .\\
  
  $ \bar $\> $  \T withdraw/.q.p\: ? $ ;\\
  \> $\B add /q\B\ to /\S after/ $ ;  $ \B remove /q\B\ from /\S prio/ $ ; \\
  \> $ \B if / q < p \B\ then /
  \B remove /q\B\ from /\S away/ \B\ and /\S need/ \B\ endif/ $ .\\

  $ \bar $ \> \T after:/ \quad $ q\in\S after/\Land \S copy/(q)\ne\S none/\TO$\\
  \> $ \T ack/.p.q \:! $ ; 
  $ \B remove /q\B\ from /\S after/$ ; $ \S copy/(q):=\S none/ $ . \\

  $ \bar $\> $  \T ack/.q.p \:? $ ;
  $ \B remove /q\B\ from /\S wack/ $ . \\

  $ \bar $\> $  \T gra/.q.p\: ? $ ; $ \B remove /q \B\ from /\S need/ $ .\\
 
  $\bar$ \> \T prom:/ \quad $ q\in\S prom/\Land (\S pc/ \leq 26
  \Lor \S job/*\S copy/(q)) \TO $ \\
  \> $ \T gra/.p.q\: ! $ ; $ \B add /q\B\ to /\S away/ $ ;
  $ \B remove /q\B\ from /\S prom/ $ ;\\
  \> $ \B if / \S pc/ = 26 \Land  \neg\, \S job/*\S copy/(q) $\\
  \> $ \B then /\B add /q\B\ to /\S need/\B\ endif/ $ .\-\\
  \B end receive/ .
\end{tab}
\caption{The central algorithm for process $p$ (with $p$'s private
  variables)}\label{fifo}
\end{figure}

Every process has a private variable \S job/ of type \S Job/,
initially $\S none/$. It has private variables \S nbh/, \S nbh0/, \S
prio/, \S wack/, \S after/, \S away/, \S need/, \S prom/, which all
hold finite sets of processes.  All these sets are initially empty.
It has the private variables \S pcr/, \S pack/, \S fun/, and \S
curlist/, which serve in the registration algorithm and are treated in
Section \ref{queryalg}.  The private variable \S nbh0/ is a history
variable.  It is set to \S nbh/ when the process executes line 25 and
reset when the process leaves 28. It is not used in the algorithm, but
serves in the proof of correctness.

Process $p$ has a private extendable array 
$\S copy/.p$, such that $\S copy/.p(q)=\S job/.q$ holds under suitable
conditions.  It is set when receiving $\T notify/.q.p$ and reset in \T
after/.  We use the convention that $\S copy/.p(q) = \S none/ $ when
$q$ is not in the current range of the array. Initially, the range of
$\S copy/.p$ is empty.

At line 21, the environment gives process $p$ a job.  The lines 22 and
23 are treated in Section \ref{queryalg} with the registration
algorithm.

For now we just assume that $\S nbh/.p$ gets some value before $\S
curlist/.p$ becomes empty at line 23 and that, somehow, predicate
\S Rq1/ is guaranteed.  

The central algorithm uses four message keys: \T notify/, \T
withdraw/, \T ack/, \T gra/. The messages \T notify/ hold values of
the type \S Job/, the other messages hold no values, they are of type
\T void/.  The alternatives of \B receive/ with labels \T after/ and
\T prom/ correspond to delayed answers.  The message key \T asklist/
is treated in Section \ref{queryalg}.

\subsection{A layered solution} \label{layered}

As announced, the central algorithm has three layers: an outer
protocol to communicate the jobs, a middle layer to regulate access to
the inner protocol, and an inner protocol for mutual exclusion.  The
three layers have waiting conditions in the lines 24, 25, 26,
respectively.  The outer protocol uses the messages \T notify/, \T
withdraw/, \T ack/, and the private variables \S job/, \S nbh/, \S
wack/, and \S copy/. It can be obtained from Fig. \ref{fifo} by
removing the lines 22, 23, and all commands that use the messages \T
gra/ and the private variables \S prio/, \S need/, \S away/, \S prom/.
The middle layer consists of all commands that use \S prio/.  The
inner protocol consists of the commands that use the messages \T gra/
and the private variables \S need/, \S away/, \S prom/.

\subsubsection{The outer protocol } \label{introOuter}

In the outer protocol, every process $p$ sends its \S job/ to all
neighbours by means of \T notify/ messages in line 25, and it
withdraws this in line 28. Reception of \T notify/ and \T withdraw/ is
handled in the first three alternatives of \B receive/.  In the first
line of \T notify/, process $p$ registers the \S job/ of $q$ in $\S
copy/.p(q)$.  The conditional statement of \T notify/ belongs to the
inner protocol.  When process $p$ has received both \T notify/ and \T
withdraw/ from $q$, it can execute the alternative \T after/ of \B
receive/, send \T ack/ back to $q$, and reset $\S copy/.p(q):= \S
none/$.  In this way, we allow the message \T withdraw/ to arrive
before \T notify/, even though it was sent later.

In the fourth alternative of \B receive/, when process $p$ receives an
\T ack/ from $q$, it removes $q$ from its set \S wack/. This variable
has been set by $p$ to $\S nbh/.p$ in line 28, while sending \T
withdraw/ to its neighbours.  When process $p$ arrives again at line
24, it waits for \S wack/ to be empty. In this way, it verifies that
all its \T withdraw/ messages have been acknowledged, to preclude
interference by delayed messages.

\subsubsection{The inner protocol} \label{inner}

The inner protocol serves to ensure the resource allocation condition
\S Rq2/.  Every process forms in line 25 a set \S need/ of processes
from which it needs ``permission'' to enter \S CS/.  As the condition
$q\B\ in /\S CS/$ is now equivalent to $q\B\ at /27$, condition \S
Rq2/ is implied by the invariants
\begin{tab}
\S Jq0:/ \> $ r\in\S need/.q \Implies q \B\ at / 26 \Land r\in\S nbh0/.q $ ,\\
\S Rq2a:/ \> $ r\in\S nbh0/.q \Land q\in\S nbh0/.r $\\
\> $ \Implies r\in\S need/.q \Lor q\in\S need/.r 
\Lor \S job/.q * \S job/.r $ .
\end{tab}
We postpone the proofs of these invariants to Section \ref{mx_proof}.

At this point, we break the symmetry. Recall that we represent the
processes by natural numbers, and that, if $q<r$, we say that process
$q$ is \emph{lower} and that $r$ is \emph{higher}.  
Notifications from lower processes are regarded as requests for
permission that must be granted when possible, because we give priority
to lower processes. Therefore, when process
$p$ receives \T notify/ from $q < p$, it stores $q$ in $\S prom/.p$.
When the alternative \T prom/ is enabled, process $p$ grants
permission by sending \T gra/ to $q$.  In $\S away/.p$, it records the
lower processes to which it has granted permission. If it is at line
26 and in conflict with $q$, it puts $q$ in $\S need/.p$.

There is a difference in the interpretation of $\S need/.p$ for lower
and higher processes. If $q < p$, then $q\in\S need/.p$ means that
process $p$ is in conflict with $q$ and has granted priority to
$q$. Process $p$ therefore needs to wait for $q$'s \T withdraw/
message. If $p<q$, then $q\in\S need/.p$ means that process $p$ has
requested permisssion from $q$ and is still waiting for the \T gra/
message (no conflict implied).

\subsubsection{The middle layer} \label{middle}

Without waiting at line 25, the algorithm of Figure \ref{fifo} would
satisfy \S Rq2/, but it would have two defects. At line 26, one low
process could repeatedly pass all higher conflicting neighbours. Also,
long waiting queues of conflicting processes could form.  These
defects are treated by the middle layer.

When process $p$ enters at line 24, it assigns to $\S prio/.p$ the set
of processes with known incompatible \S job/s.  This set is finite
because it is contained in the finite set $\{ q \mid \S copy/.p(q)\ne
\S none/\}$.  Process $p$ then waits for the set $\S prio/.p$ to
become empty. It removes $q$ from $\S prio/.p$ when it receives \T
withdraw/ from $q$. In this way, the middle layer only admits
processes to the inner protocol that are not known to be conflicting
with processes in the inner protocol.  This improves the performance
by making it unlikely that at line 26 long waiting queues of
conflicting processes are formed. On the other hand, it ensures
starvation freedom. In fact, when process $p$ has executed line 25 and
its \T notify/ messages have arrived, any conflicting neighbour of $p$
that passes $p$ at line 26, will have to wait for $p$ at line 25, and
hence cannot pass $p$ again.

\begin{remark} 
  The first ideas for the present paper were tested in \cite{whh464}
  in a context with a single resource.  There, the \T notify/ messages
  are sent in the analogue of line 24 instead of line 25. This is also
  possible here. It has the effect that processes at line 25 are
  waiting for processes that arrived earlier at line 25. In other
  words, it induces a form of a first-come-first-served order. This is
  not a good idea for resource allocation. Consider, e.g., the
  following senario.

  Process $p_0$ arrives and starts using resource $r_0$ in \S CS/. Now
  processes $p_k$ for $k\geq 1$ arrive in their natural order at line
  24 at intervals $>\Delta$ (see Section \ref{async}), needing the
  resources $r_{k-1}$ and $r_k$, and with empty \S wack/. If the
  notifications are sent in line 24, they all remain waiting at line
  25, because $p_{k-1}\in\S prio/.p_k$, until $p_0$ has passed \S
  CS/. In the present version, with notifications sent at line 25, the
  processes with $k$ odd start waiting at line 25, while the processes
  with $k$ even go through to \S CS/.
\end{remark}

\subsection{The job model} \label {jobmodel}

For the central algorithm, jobs are abstract objects with a
compatibility relation. We need a job model for the registration
algorithm.

In the simple case of exclusive access, one may regard every job as a
set of resources, and define jobs to be compatible if and only if
these sets are disjoint.  In this case, we have $u*v \equiv (u\cap v =
\emptyset)$ and $\S none/ = \emptyset$.  If one wants to distinguish
read requests from write requests, however, one needs a more
complicated job model.  In this case, one could model jobs as pairs of
sets of resources, say $(r, w)$ where $r$ is the set of the resources
for read access and $w$ the set of resources for write access.  Jobs
$(r_1, w_1)$ and $(r_2, w_2)$ are then compatible if and only if the
intersections $w_1\cap w_2$, $w_1\cap r_2$, and $r_1\cap w_2$ are
empty.  One can also propose compatibility relations with more than
two permission levels, where ``shallow'' access (e.g.\ reading of
metadata) is allowed concurrently with ``innocent'' writing.

We therefore use a flexible job model that allows an arbitrary number
$K \geq 1$ of levels.  Let $\S upto/(K)$ be the set $\{i\in\Nat\mid i
\leq K\}$.  Let \S Rsc/ be the set of resources. We characterize a job
$u$ as a function $\S Rsc/\to\S upto/(K)$, and define compatibility of
jobs $u$ and $v$ by requiring that $u+v$ is at most $K$:
\begin{tabn} \label{defCompatible}
\> $ u * v \EQ (\all c\in \S Rsc/: u(c) + v(c) \leq K) $ .
\end{tabn}
In this way, relation $*$ is indeed symmetric, and the job \S none/
given by $\S none/(c) = 0$ for all $c$ is compatible with all jobs.

The simple job model is the case with $K=1$.  We take $K=2$ for the
readers/writers problem. Read access at resource $c$ requires $u(c)
\geq 1$, write access requires $u(c) = 2$.  In this way, concurrent
reading is allowed, while writing needs exclusive access.

\subsection{The registration algorithm} \label{queryalg}

The registration algorithm serves to provide the processes with
upperbound sets \S nbh/.  The idea is that every process registers for
the resources that it needs, and that it then obtains lists of other
registered clients.  For the sake of flexibility, we assume that the
resources are distributed over \emph{sites} by means of a function $\S
loc/:\S Rsc/ \to \S Site/ $ from resources to sites.  The sets \S Rsc/
and \S Site/ are supposed to be finite.  In order to preclude that
they become bottlenecks, the sites get only the task to maintain a
registration list and to answer to queries.

A process with a high registration level needs to perform much
communication.  We therefore offer the processes the option to lower
their registration level, concurrently with the other activities.

We use the job model with upper bound $K$ of Section \ref{jobmodel}.
In particular, every job is a function $\S Rsc/\to\S upto/(K)$.  A
process can only use resource $c$ at level $k$ if it is registered at
site $\S loc/(c)$ for level $\geq k$.  It therefore has an array \S
fun/ such that $\S fun/.p(s)$ is the $p$'s registration level at site
$s$.  When some process obtains a new \S job/, it needs at site $s$
the level
\begin{tab}
\> $ L(\S job/)(s) = \R Max/ \{\S job/(c) \mid \S loc/(c) = s\}$ .
\end{tab}
For functions $f$, $g:\S Site/\to\Nat$, we define $f\leq g$ to mean
$(\all s: f(s) \leq g(s))$.  

In line 21 of Figure \ref{fifo}, the environment gives process $p$ a
new job.  At line 22, process $p$ may have to wait, to avoid
interference with the lowering thread that is treated below.  It then
sends $L(\S job/)(s)$ to site $s$ if it is positive, and thus asks the
site for a lists of clients that might compete for its resources.  The
set \S nbh/ gets its contents while the process waits at line 23,
through messages from the sites in answer to \T asklist/.

We give the sites very small tasks.  Site $s$ communicates with the
processes by receiving messages \T asklist/ and \T lower/, and
answering by \T answer/ and \T done/, respectively; this
according to the code:
\begin{tab}
\>\+\+ $\B site/(s)\B\ from/(q):$\\
$\bar$ \> $ \T asklist/.q.s\;?\:k $ ;
 $ \S list/(q) := \R max/(\S list/(q), k) $ ;\\
\> $ \T answer/.s.q\:!\:\{r \mid \S list/(r) > K-k \} $ .\\
$\bar$ \> $ \T lower/.q.s\;?\:k $ ; $ \S list/(q):= k $ ;
 $ \T done/.s.q\:! $ .\-\\
\B end site/ .
\end{tab}
In this code, \S list/ is the private extendable array $\S list/.s$ of
$s$, that holds the levels of registered clients.  The value 0 means
not-registered.  The answering messages \T answer/ contain the
processes that are in potential conflict at the level $k$.  If process
$q$ lowers at site $s$ its level to $k$, it gets response \T done/ as
an acknowledgment.

Process $p$ receives the messages from site $s$ according to:
\begin{tab}
\>\+\+ $ \B listen/(p)\B\ from/(s): $ \\
$ \bar $ \> $ \T answer/.s.p\:?\: v$ ; 
 $ \S nbh/:=\S nbh/\cup(v\setminus \{p\}) $ ;\\
\> $ \B if /\S fun/(s) < L(\S job/)(s) \B\ then/ $ \\
\>\>  $ \S pack/:=\S pack/\cup(v\setminus \{p\}) $ ;\\
\>\> $ \S fun/(s) := L(\S job/)(s) \B\ endif/ $ ;\\
\> $ \S curlist/ := \S curlist/\setminus \{s\} $ .\\
$ \bar $ \> $ \T done/.s.p\:? $ ; 
$ \S reglist/ := \S reglist/\setminus \{s\} $ .\-\\
\B end listen/ .
\end{tab}
If process $p$ increases its registration level at site $s$, it
collects the potential competitors in the private variables \S pack/.
When it has received all answers, its sends all members of \S pack/ a
message \T hello/ in line 23, and waits for the responses \T welcome/
at line 24.  The reason for this is that the members of \S pack/ can
be anywhere in their protocol and need not have $p \in \S nbh/$.

The new messages \T hello/ and \T welcome/ are between processes.
They are treated in the following two alternatives that should be
included in \B receive/ of Figure \ref{fifo}.
\begin{tab}
\>\+ $ \bar $ \> $ \T hello/.q.p\:? $ ; \\
\> $ \T welcome/.p.q\:!\:(\S pc/ \geq 26 \Land q\notin \S nbh/\:?\:
\S job/ : \S none/) $ ;\\
\> $ \B if / \S pc/\geq 23\B\ then /
\B add /q\B\ to /\S nbh/\B\ endif/ $ .\\
$ \bar $ \> $ \T welcome/.q.p\:?\: v $ ; $ \B remove /q\B\ from /\S pack/$ ;\\
\> $ \B if /v\ne\S none/\B\ then / \S copy/(q) := v\B\ endif/ $ .
\end{tab}
If process $p$ is in $\{26\dots\}$ and $q\notin\S nbh/.p$, the message
\T welcome/ carries the \S job/ of $q$ as a belated notification.
Otherwise, it only holds \S none/ as an acknowledgement.  Here, we use
a conditional expression of the form $(b\,?\,x:y)$ to mean $x$ if $b$
holds and otherwise $y$, as in the programming language C.

In \T welcome/, the assignment to $\S copy/.p(q)$ ensures that, when
process $p$ raises its registration level, it cannot enter its inner
protocol when in conflict with $q$, while $q$ remains in its inner
protocol.  At this point, the guard of line 25 is necessary for
safety.  This is a third reason for the middle layer of Section
\ref{middle}.

If process $p$ receives \T hello/ and is in $\{23\dots 25\}$, it adds
$q$ to \S nbh/ to notify it at line 25.  If process $p$ sends its \S
job/ to $q$, it adds $q$ to \S nbh/ because the \S job/ must be
withdrawn later.  At this point, the set \S nbh0/ can become a proper
subset of \S nbh/.

Lowering means the choice of a new value \S news/ for \S fun/, which
can be equal or lower than the current value.  The processes can lower
at the sites more or less concurrently with the loop 21--28. For this
purpose, each of them gets a separate concurrent thread with a
separate process counter \S pcr/. We write $q\B\ at /\ell$ to mean $\S
pc/.q=\ell$ if $\ell\in\{21\dots 28\}$, and to mean $\S pcr/.q=\ell$
if $\ell\in\{31\dots 33\}$.  If $\ell\in\{21\dots 28\}$, we write
$q\B\ in /\{\ell\dots\}$ to mean $\S pc/.q\geq\ell$.  Initially, every
process is both \B at/ 21 and \B at/ 31.

\begin{tab}
  \>\+ $ \B lowering/(p): $\\
  \B loop/\\
  31: \>\> $ \B choose /\S news/\B\ with / \S news/\leq \S fun/ $ ;\\
  32: \>\> $ \B await / \S pc/ = 21\Lor
  (\S pc/ \geq 25\Land L(\S job/) \leq \S news/)$ ;\\
  \>\> $ \S reglist/ := \{s \mid \S news/(s) \ne \S fun/(s)\} $ ;
  $ \S fun/:= \S news/ $ ; \\
  \>\> $ \B for all /s \in\S reglist/\B\ do /
  \T lower/.p.s\:!\:\S news/(s) \B\ od/ $ ; \\
  33: \>\> $ \B await /\S reglist/ = \emptyset $ ;\\
  \B endloop/ .
\end{tab}

The lowering thread shares the variable \S fun/ with the main thread,
and has private variables \S news/ and \S reglist/.  It is idle at
line 31.  When it is idle, to replace \S fun/, the environment can
give the process a value \S news/.  At line 32, the process informs
the sites $s$ for which the level is to be modified by sending them a
message \T lower/ with the new value.  If the main thread is in
22--24, the lowering thread needs to wait to avoid interference.  The
guard $L(\S job/) \leq \S news/$ is needed to protect the current job
of $p$. 

Initially, the private sets $\S pack/.p$ and $\S reglist/.p$ are empty
and the functions $\S fun/.p$, $\S news/.p$, and $\S list/.s$ are
constant zero.

\subsection{Aborting the entry protocol} \label{s.abort}

As announced we give the environment the option to abort the entry
protocol under certain circumstances:

\begin{tab}
  \> $ \B abort/(p): $ \+\+\\
  $ \bar\quad \S pc/ = 24 \Land \S pack/ = \emptyset
  \TO \S job/ := \S none/\;;\;
  \S nbh/ := \emptyset\;;\; \S pc/ := 21 $ .\\
  $ \bar\quad \S pc/ = 25 \TO \S job/ := \S none/ $ ; 
  $\S nbh/ := \emptyset$ ; $\S prio/ := \emptyset$ ; $\S pc/ := 21 $ .\\
  $ \bar\quad \S pc/ = 26 \Land
  \S need/\cap\{q\mid p < q\} = \emptyset \TO$ \\
  \>\> $ \B for all /q\in\S nbh/ \B\ do /\T withdraw/.p.q\:!\B\ od/ $ ;
  $\S wack/ := \S nbh/ $ ; \\
  \>\> $ \S need/ := \emptyset $ ; $ \S job/ := \S none/\;;\;\S nbh/
  := \S nbh0/ := \emptyset$ ; $ \S pc/ := 21 $ .
\end{tab}

When the environment of $p$ wants to abort the entry protocol at line
24, it may have to wait for emptiness of \S pack/.  This waiting is
necessary to catch the expected \T welcome/ messages.  It is shorter
than $2\Delta$ (see Section \ref{async}).  At line 26, it may have to
wait for emptiness of the higher part of \S need/, necessary to catch
the expected \T gra/ messages.  This waiting is short because process
$p$ has priority over its higher neighbours. Indeed, the higher part
of \S need/ is empty after $\Gamma + 2\Delta$ (see Section
\ref{async}).

To summarize, every process $p$ has four concurrent threads that
execute their atomic steps in an interleaved way.  These threads are:
the \emph{environment} with the prompting steps at lines 21 and 31 and
the 3 aborting steps, the \emph{forward} thread with the steps at the
lines 22--28; the \emph{lowering} thread with the steps at the lines
32--33; and the \emph{triggered} thread for the 6 messages from other
processes, the 2 messages from sites, and the delayed answers \T
after/ and \T prom/.  Moreover, every site has two commands for the
message keys \T asklist/ and \T lower/.

\section{Verification of Safety}
\label{algorithm}

In this section, we prove the safety properties of the algorithm.  For
this purpose, we model the algorithm as a transition system that is
amenable to formal verification.

The modelling, in particular of the asynchronous messages, is
discussed in Section \ref{channels}. In Section \ref{safety}, we
describe the verification of safety by means of invariants.  In
Section \ref{peculiarity}, we describe some choices we made to ease
our proof management.

We first treat the central algorthm of Figure \ref{fifo} and prove
that it satisfies its requirement \S Rq2/.  In Section \ref{InvOuter}
we develop the invariants of the outer protocol that are needed in the
proof of \S Rq2/.  As indicated in Section \ref{inner}, the mutual
exclusion predicate \S Rq2/ is implied by \S Jq0/ and \S Rq2a/. These
invariants of the inner protocol are proved in Section \ref{mx_proof},
together with a number of auxiliary ones.

In Section \ref{queryadd}, we add the registration algorithm, verify
that the new messages are modelled correctly and that this addition
does not disturb the safety properties of the central algorithm.
Section \ref{queryverif} shows that it indeed serves its purpose and
guarantees the invariant \S Rq1/.

In Section \ref{invarProgress}, we conclude the list of invariants by
presenting some invariants that are needed in the proofs of progress
in the Sections \ref{nodead} and \ref{proofThm1}.

\subsection{Modelling of messages} \label {channels}

We model the algorithm as a transition system with as state space the
Cartesian product of the private state spaces augmented with the
collection of messages in transit.  The messages in transit are
modelled by shared variables.

There are two kinds of messages: messages of type \T void/ without
content but consisting of a single key word, and messages with content.

For a message key $m$ of type \T void/, we use $m.q.r$ as an integer
shared variable that holds the number of messages $m$ in transit from
$q$ to $r$, to be inspected and modified only by the processes $q$ and
$r$.

A sending command $m.p.q\,!$ from $p$ to $q$ e.g., as used in Figure
\ref{fifo}, corresponds to an incrementation of $m.p.q$ by one, which
can be denoted $ \ m.p.q\T ++/ \ $.  A receiving command $m.q.p\,?$
from $q$ by $p$ followed by a statementlist $S$ corresponds to a
guarded alternative in which $m.q.p$ is decremented by one when
positive:
\begin{tab}
\> $\bar\quad m.q.p > 0 \TO m.q.p\T --/ \;;\; S$ .
\end{tab}
The value of $m.q.r$ can be any natural number, but in our algorithm,
we preserve the invariants $m.q.r \leq 1$: there is always at most one
message $m$ in transit from $q$ to $r$.  Initially, no messages are in
transit: $m.q.r=0$ for every message key $m$.

For messages with content, like \T notify/, the above way of modelling
cannot be used.  In principle, we should model such messages by means
of bags (multisets) of messages from sender to destination. In the
algorithm, however, there is never more than one message with key $m$
in transit from $q$ to $r$. For simplicity, therefore, we model such a
channel with key $m$ from $q$ to $r$ as a shared variable $m.q.r$,
which equals $\bot$ if and only if there is no message $m$ in transit
from $q$ to $r$.

In particular, we model the sending command $\T notify/.p.q\,!\:\S
job/$ from $p$ to $q$ at line 25 therefore as $\T notify/.p.q := \S
job/$.  Reception of \T notify/ from $q$ by $p$ followed by $S$ is
modelled by
\begin{tab}
\> $\bar\quad \T notify/.q.p \ne \bot \TO $\+\+ \\
$ \S copy/(q) := \T notify/.q.p \;;\; \T notify/.q.p:= \bot\;;\; S $ .
\end{tab}
Initially, $\T notify/.q.r = \bot$ for all $q$ and $r$. 

As we model the bag by a single variable, we need to make sure that
the bag has never more than one element.  In other words, this way of
modelling gives us the proof obligation that $\T notify/.q.r$ is sent
only under the precondition $\T notify/.q.r = \bot\,$.  This will
follow from the invariant \S Iq7a/ of Section \ref {InvOuter} below.

\subsection{Using invariants} \label{safety}

In a distributed algorithm, at any moment, many processes are able to
do a step that modifies the global state of the system. The only way
to reason successfully and reliably about such a system is to analyse
the properties that cannot be falsified by any step of the
system. These are the invariants.

Formally, a predicate is called an \emph{invariant} of an algorithm if
it holds in all reachable states.  A predicate $J$ is called
\emph{inductive} if it holds initially and every step of the algorithm
from a state that satisfies $J$ results in a state that also satisfies
$J$.  Every inductive predicate is an invariant. Every predicate
implied by an invariant is an invariant.

When a predicate is inductive, this is often easily verified. In many
cases, the proof assistant PVS is able to do it without user
intervention. It always requires a big case distinction, because the
transition system has many different alternatives.

Most invariants, however, are not inductive.  Preservation of such a
predicate by some alternatives needs the validity of other invariants
in the precondition. We use PVS to pin down the problematic
alternatives, but human intelligence is needed to determine the useful
other invariants.

In proofs of invariants, we therefore use the phrase
``\emph{preservation of $J$ at $\ell_1\dots \ell_m$ follows from
  $J_1\dots J_n$}'' to express that every step of the algorithm with
precondition $J\land J_1\dots J_n$ has the postcondition $J$, and that
the additional predicates $J_1\dots J_n$ are only needed for the
alternatives $\ell_1\dots \ell_m$.

We use the following names for the alternatives.  The first 8
alternatives of \B central/ in Figure \ref{fifo} are indicated by the
line numbers. The alternatives of \B receive/ are indicated by the
message names and the labels \T after/ and \T prom/. We indicate the
aborting alternatives of Section \ref{s.abort} by \T ab24/, \T ab25/,
\T ab26/.

For all invariants postulated, the easy proof that they hold initially
is left to the reader. We use the term invariant in a premature
way.  See the end of this section.  

\subsection{Proof engineering} \label{peculiarity}

Effective management of the combined design and verification of such
an algorithm requires a number of measures.  We give most invariants
names of the form \S Xqd/, where \S X/ stands for an upper case letter
and \S d/ for a digit. This enables us to reorder and rename the
invariants in the text and the PVS proof files and to keep them
consistent.  Indeed, any modification of proof files must be done very
carefully to avoid that the proof is destroyed.  Using short
distinctive names also makes it easy to search for definitions and to
see when all of them have been treated.

Line numbers may change during design.  In order to use query-replace
for this in all documents, we use line numbers of two digits.  In this
way, we preclude that the invariants get renamed by accident.  This is
also the reason to use disjoint ranges for the line numbers of Figure
\ref{fifo} and the lowering thread of Section \ref{queryalg}.

There is a trade off in the size of the invariants.  Smaller
invariants are easier to prove and easier to apply, but one needs more
of them, and they are more difficult to remember.  We therefore often
combine a number of simple properties in a single invariant, see \S
Iq1/ below.  Bigger invariants are sometimes needed to express
different aspects of a complicated state of affairs, compare \S Iq2/
below.

\subsection {Invariants of the outer protocol}
\label{InvOuter} 

For now, we restrict ourselves to the transition system with 14
transitions of Figure \ref{fifo}.  The nine steps of Section
\ref{queryalg} are added in Sections \ref{queryadd} and
\ref{queryverif}.  The three steps of Section \ref{s.abort} are added
in Section \ref{invarProgress}.

We have two invariants about neighbourhoods:
\begin{tab}
\S Iq0:/ \> $ q\notin \S nbh/.q$ ,\\
\S Iq1:/ \> $ r\in \S nbh0/.q \Implies q\B\ in /\{26\dots\} 
\Land r\in \S nbh/.q$ . 
\end{tab}
These predicates are easily seen to be inductive.

At line 24, the processes wait for acknowledgements as expressed by
emptiness of \S wack/. This corresponds to the invariant:
\begin{tab}
\S Iq2:/ \> $ \T withdraw/.q.r + |\,q\in\S after/.r\,| + \T ack/.r.q
= |\,r\in\S wack/.q\,| $ .
\end{tab}
Recall that $\T withdraw/.q.r$ is the number of \T withdraw/ messages
from $q$ to $r$ and that $\T ack/.r.q$ is the number of \T ack/
messages from $r$ to $q$.  For Boolean $b$, we define $|\,b\,|\in\Nat$
to be 1 if $b$ holds, and 0 otherwise.  Predicate \S Iq2/ is a concise
expression of a complicated fact.  Namely, $r\in\S wack/.q$ holds if
and only if if there is a \T withdraw/ message in transit from $q$ to
$r$, or an \T ack/ message in transit from $r$ to $q$, or $q\in\S
after/.r$.  Furthermore, the three possibilities are mutually
exclusive. Finally, there is at most one \T withdraw/ message from $q$
to $r$, and at most one \T ack/ message from $r$ to $q$.  One could
therefore split \S Iq2/ into 9 different invariants.

Preservation of \S Iq2/ when \T withdraw/ is sent at line 28 follows
from the inductive invariant:
\begin{tab}
\S Iq3:/ \> $ q\B\ in /\{25\dots\} \Implies \S wack/.q = \emptyset $ .
\end{tab}

For practical purposes, it is useful to notice that \S Iq2/ and \S
Iq3/ together imply
\begin{tab}
\S Iq2a:/ \> $ q\B\ in /\{25\dots\} \Implies \T withdraw/.q.r = 0 \Land 
q \notin\S after/.r $ .
\end{tab}

As announced, one of the functions of the outer protocol is to
guarantee that, under suitable conditions, process $r$ has the job of
$q$ in its variable $\S copy/.r(q)$.  In fact, the conditions are that
$r$ is in $\S nbh0/.q$ and that there is no message \T notify/ in
transit from $q$ to $r$, as expressed in the invariant
\begin{tab}
\S Iq4:/ \> $ r \in\S nbh0/.q \Land \T notify/.q.r = \bot 
 \Implies \S copy/.r(q) = \S job/.q $ .
\end{tab}

Preservation of \S Iq4/ at line 21 follows from \S Iq1/.  Preservation
at \T after/ follows from \S Iq1/ and \S Iq2a/.  Preservation at line
25 and \T notify/ follows from \S Iq1/ and the new invariants:
\begin{tab}
\S Iq5:/ \> $ \S job/.q = \S none/ \EQ q\B\ at / 21 $ ,\\
\S Iq6:/ \> $ q \B\ in /\{26\dots\} \Land
\T notify/.q.r \ne \bot \Implies \T notify/.q.r = \S job/.q $ .
\end{tab}
Predicate \S Iq5/ is inductive.  Preservation of \S Iq6/ at line 25
follows from the new invariant
\begin{tab}
\S Iq7a:/ \> $ q\B\ at /25 \Implies
\T notify/.q.r = \bot\Land \S copy/.r(q) =\S none/ $ .
\end{tab}
Predicate \S Iq7a/ is logically implied by \S Iq2/, \S Iq3/, and the new
invariant:
\begin{tab}
\S Iq7:/ \> $ (\T notify/.q.r = \bot\Land \S copy/.r(q) =\S none/)
\Lor (q \B\ in /\{26\dots\} \Land r\in\S nbh/.q) $\\
\> $  \Lor\, \T withdraw/.q.r > 0 \Lor q\in\S after/.r $ .
\end{tab}
Preservation of \S Iq7/ at \T after/ follows from the new invariant:
\begin{tab}
\S Iq8:/ \> $ \T notify/.q.r = \bot \Lor \S copy/.r(q) = \S none/$ .
\end{tab}
Preservation of \S Iq8/ at line 25 follows from \S Iq7a/. 

This is not circular reasoning: the above argument shows that, if all
predicates \S Iq*/ hold in the precondition of any step, they also
hold in the postcondition. Therefore, the conjunction of them is
inductive, and each of them is an invariant.

\subsection{The proof of mutual exclusion} \label{mx_proof}

In Section \ref{inner}, we saw that the mutual exclusion predicate \S
Rq2/ is implied by \S Jq0/ and \S Rq2a/.  In this section, we prove
that these two predicates are invariants.

Preservation of predicate \S Jq0/ at \T prom/ follows from \S Iq4/, \S
Jq0/, and the new invariants
\begin{tab}
  \S Rq1a:/ \> $ q\B\ in /\{26\dots\} \Land r\B\ in /\{26\dots\}
  \Implies q = r \Lor r\in\S nbh0/.q \Lor \S job/.q * \S job/.r $ ,\\
  \S Jq1:/ \> $ q\in\S prom/.r \Implies q < r $ ,\\
  \S Jq2:/ \> $ q < r\Implies |\,\T notify/.q.r \ne\bot \,| +
  |\,q\in\S prom/.r\,| + \T gra/.r.q = |\, r\in\S need/.q\,| $ .
\end{tab}
Predicate \S Rq1a/ is a strengthening of \S Rq1/ of Section
\ref{safetyreq} that is guaranteed by the registration algorithm.
This is shown in Section \ref{queryverif}.  Predicate \S Jq1/ is
inductive.  Preservation of \S Jq2/ at 25 and \T prom/ follows from \S
Iq5/, \S Jq0/ and \S Jq1/.  Note the similarity of \S Jq2/ with \S
Iq2/.

Predicate \S Rq2a/ of Section \ref{inner} is implied by \S Iq0/, \S
Iq1/, \S Iq2a/, and the new invariants:
\begin{tab}
\S Jq3:/ \> $ q < r \Land  r\in\S nbh0/.q
\Implies r\in\S need/.q\Lor q \in\S away/.r $ ,\\
\S Jq4:/ \> $ q\in\S away/.r \Land q\in \S nbh0/.r 
\Land \T withdraw/.q.r = 0 $\\
\> $ \Implies q\in\S need/.r\Lor \S job/.q *\S job/.r$ .
\end{tab}

Preservation of \S Jq3/ at \T withdraw/ follows from \S
Iq1/, \S Iq2a/.  At \T gra/, it follows from the new invariant 
\begin{tab}
  \S Jq5:/ \> $ \T gra/.r.q > 0 \Implies q\in\S away/.r $ .
\end{tab}

Preservation of \S Jq4/ at 21 follows from \S Iq1/.  Preservation at
\T prom/ follows from \S Iq1/, \S Iq4/, \S Jq0/, \S Jq1/, and \S Jq2/.
Preservation at line 25 and at \T gra/ and \T withdraw/ follows from
\S Iq4/, \S Jq5/ and the new invariants:
\begin{tab}
  \S Jq6:/ \> $ q \in\S away/.r \Implies
  q < r \Land \T notify/.q.r = \bot $ ,\\
  \S Jq7:/ \> $ q \in\S away/.r \Land \T withdraw/.q.r = 0 \Implies
  r\in\S nbh0/.q $ .
\end{tab}

Preservation of \S Jq5/ at \T withdraw/ follows from \S Iq2a/, \S
Jq0/, and \S Jq2/.  Preservation of \S Jq6/ follows at line 25 from \S
Iq1/, \S Iq2/, \S Iq3/, and \S Jq7/, and at \T prom/ from \S Jq1/, \S
Jq2/.  Preservation of \S Jq7/ at 25 and 28 follows from \S Iq1/, and
at \T prom/ and \T withdraw/ from \S Jq0/, \S Jq1/, \S Jq2/, and \S
Jq6/.

This concludes the proof of the invariants \S Jq0/ and \S Rq2a/ under
assumption of \S Rq1a/.

\subsection {Adding registration} \label{queryadd}

We now add the registration algorithm to the central algorithm, i.e.,
we extend the transition system with the nine transitions of Section
\ref{queryalg}.  There are three things to verify. The modeling must
be correct, the proof of the central algorithm must not be disturbed,
and condition \S Rq1a/ of Section \ref{mx_proof} must be guaranteed.
The first two points are treated in this section.  The third point is
postponed to the next section.

The new messages \T asklist/, \T answer/, \T welcome/, and \T lower/
are not void, and are therefore modelled in the same way as \T
notify/.  This gives us the obligation to prove, for each of these
four message keys $m$, that the value of $m$ is $\bot$ whenever a
message $m$ is sent.  This follows from the invariants (similar to \S
Iq2/):
\begin{tab}
  \S Kq0:/ \> $|\, \T asklist/.q.s\ne\bot\,| + |\,\T answer/.s.q
  \ne\bot\,| = | \, s\in\S curlist/.q\, | $ ,\\
  \S Kq1:/ \> $ \T hello/.q.r + |\,\T welcome/.r.q\ne\bot\,| = 
  |\, q\B\ at /24\Land r\in\S pack/.q\, | $ ,\\
  \S Kq2:/ \> $|\, \T lower/.q.s\ne\bot\,| + \T done/.s.q
  = | \, q \B\ at /33 \Land s\in\S reglist/.q\, | $ .
\end{tab}
Predicate \S Kq0/ is preserved at 22 because of the inductive invariant:
\begin{tab}
\S Kq3:/ \> $ q\B\ at /23\Lor \S curlist/.q  = \emptyset $ .
\end{tab}
The invariants \S Kq0/ and \S Kq3/ together imply
\begin{tab}
\S Kq0a:/ \> $ \T answer/.s.q\ne\bot \Implies q\B\ at /23 $ .
\end{tab}
Predicate \S Kq1/ is preserved at \T answer/ because of \S Kq0a/.
Predicate \S Kq2/ is inductive.

We turn to the question whether the central algorithm is disturbed by
the new registration commands.  The only variables that the
registration algorithm shares with the central algorithm are \S pc/,
\S pcr/, \S nbh/, \S pack/, and \S copy/.  Sharing \S pc/ is harmless,
because the flow of controle is not modified.  Sharing \S pack/ and \S
pcr/ is harmless because the central algorithm has no invariants for
\S pack/ and \S pcr/.  Sharing \S nbh/ is almost harmless because all
invariants except \S Iq0/ allow enlarging \S nbh/.  Predicate \S Iq0/
is preserved by \T hello/ because of $\T hello/.q.q=0$ which follows
from \S Kq1/ together with the inductive invariant
\begin{tab}
  \S Kq4:/ \> $ q\notin \S pack/.q $ .
\end{tab}

Modification of \S copy/ by \T welcome/ requires new invariants.
Predicate \S Iq4/ is preserved at \T welcome/ because of \S Iq1/ and
the new invariant
\begin{tab}
\S Kq5:/ \> $ q\B\ in /\{26\dots\} \Implies 
\T welcome/.q.r \in \{ \bot, \S none/, \S job/.q \} $ .
\end{tab}
Predicate \S Iq7/ is preserved at \T welcome/ because of 
\begin{tab}
\S Kq6:/ \> $ \T welcome/.q.r \in \{ \bot, \S none/ \}
\Lor \T withdraw/.q.r > 0 $ \\
\> $ \Lor\: q\in\S after/.r
\Lor (q\B\ in /\{26\dots\} \Land r\in\S nbh/.q) $ .
\end{tab}
Predicate \S Iq8/ is preserved at \T welcome/ because of 
\begin{tab}
\S Kq7:/ \> $ \T welcome/.q.r \in \{ \bot, \S none/\}
\Lor (\T notify/.q.r = \bot \Land \S copy/.r(q) = \S none/) $ .
\end{tab}

Predicate \S Kq5/ is preserved at 25 because of \S Iq2a/ and \S Kq6/.
Predicate \S Kq6/ is preserved at \T after/ and \T hello/ because of
\S Iq1/ and \S Kq7/.  Predicate \S Kq7/ is preserved at 25 and \T
hello/ because of \S Iq2a/, \S Iq7/, and \S Kq6/.

This concludes the proof that the central algorithm in combination
with the registration algorithm preserves the invariant \S Rq2/ 
of Section \ref{safetyreq}. 

\subsection {Safety of registration} \label{queryverif}

We approach predicate \S Rq1a/ in a bottom-up fashion.  The lowering
thread does not interfere with the main thread because of the
inductive invariants:
\begin{tab}
  \S Lq0:/ \> $ q\B\ in /\{23, 24\} \Implies q\B\ in /\{31, 32\} $ ,\\
  \S Lq1:/ \> $ \S news/.q \leq \S fun/.q  $ .
\end{tab}
The sets \S prio/ and \S pack/ are only nonempty in specific
locations, as expressed by the invariants:
\begin{tab}
  \S Lq2:/ \> $ q\in\S prio/.r \Implies r\B\ at /25 
  \Land q\notin\S after/.r $ ,\\
  \S Lq3:/ \> $ q\B\ in /\{23, 24\} \Lor \S pack/.q = \emptyset $ .
\end{tab}
Indeed, \S Lq2/ is inductive.  Predicate \S Lq3/ is preserved by \T
answer/ because of \S Kq0a/.

For the communication with the sites, we claim the invariants:
\begin{tab}
  \S Lq4:/ \> $ \T asklist/.q.s \in \{\bot, L(\S job/.q)(s) \} $ ,\\
  \S Lq5:/ \> $ \T lower/.q.s \in \{ \bot, \S fun/.q(s)\} $ .
\end{tab}
Predicate \S Lq4/ is preserved at 21 and 28 because $\T asklist/.q.s
\ne\bot$ implies $q\B\ at /23$, as follows from \S Kq0/ and \S Kq3/.
Predicate \S Lq5/ is preserved at \T answer/ because of \S Kq0a/, \S
Kq2/, and the mutual exclusion invariant \S Lq0/.

There are subtle relations between $L(\S job/.q)(s)$, $\S fun/.q(s) $,
and $\S list/.s(q)$ expressed by the invariants:
\begin{tab}
  \S Lq6:/ \> $ q\B\ at / 22\Lor s\in\S curlist/.q
  \Lor L(\S job/.q)(s) \leq \S fun/.q(s) $ ,\\
  \S Lq7:/ \> $ q\B\ in /\{23\dots\} \Land \T asklist/.s.q = \bot
  \Implies L(\S job/.q)(s) \leq \S list/.s(q) $ ,\\
  \S Lq8:/ \> $\S fun/.q(s) \leq \S list/.s(q) $ .
\end{tab}
Predicate \S Lq6/ is preserved at line 32 because of \S Iq5/.
Predicate \S Lq7/ is preserved by \T asklist/ because of \S Lq4/.  It
is preserved by \T lower/ because of \S Kq2/, \S Kq3/, \S Lq0/, \S
Lq5/, and \S Lq6/.  Predicate \S Lq8/ is preserved at 32 because of \S
Lq1/, at \T answer/ because of \S Kq0/, \S Kq3/, \S Lq7/, and at \T
lower/ because of \S Lq5/.


After this preparation, we turn to the proof that the registration
algorithm satisfies it purpose, i.e., guarantees predicate \S Rq1a/ of
Section \ref{mx_proof}.  At line 23, process $q$ expects and receives
an answer from site $s$.  This answer is a set that contains process
$r$ iff $L(\S job/.q)(s) + \S list/.s(r) > K$, where $\S list/.s$ is
the value at the time of sending the answer.  In this way, we arrive
at the invariant:
\begin{tab}
  \S Mq0:/ \> $ q\B\ in /\{23\dots\} \Implies q = r \Lor
  r \in\S nbh/.q \Lor \T hello/.r.q > 0 $\\
  \> $ \Lor (r\B\ at /23 \land q\in\S pack/.r) \Lor
  L(\S job/.q)(s) + \S fun/.r(s) \leq K$ \\
  \> $ \Lor (s\in\S curlist/.q\Land(\T answer/.s.q = \bot \Lor r\in\T
  answer/.s.q)) $ .
\end{tab}
Predicate \S Mq0/ is preserved at line 22, 32, \T welcome/, \T
asklist/ because of \S Kq0a/, \S Lq1/, \S Kq1/, \S Lq4/, \S Lq8/.  It
is preserved at \T answer/ because of \S Kq0a/ and the new invariant:
\begin{tab}
  \S Mq1:/ \> $ q\B\ in /\{23\dots \} \Land \T answer/.s.r\ne\bot $ \\
  \> $ \Implies q = r \Lor r \in\S nbh/.q \Lor q\in\T answer/.s.r $\\
  \> $ \Lor L(\S job/.q)(s) + L(\S job/.r)(s) \leq K$ \\
  \> $ \Lor (s\in\S curlist/.q\Land(\T answer/.s.q = \bot \Lor r\in\T
  answer/.s.q)) $ .
\end{tab}
Predicate \S Mq1/ is preserved at 21, 22, 28 because of \S Kq0a/.  It
is preserved by \T asklist/ because of \S Kq0/, \S Kq3/, \S Lq4/, \S
Lq7/.

The predicates \S Mq0/, \S Kq1/, \S Kq3/, \S Lq6/ together imply the
derived invariant:
\begin{tab}
  \S Mq0a:/ \> $ q\B\ in /\{24\dots\} \Land r\B\ in /\{24\dots\}
  \Land \S pack/.r = \emptyset $ \\
  \> $ \Implies q = r\Lor r \in\S nbh/.q \Lor \S job/.q * \S job/.r$ .
\end{tab}
Predicate \S Mq0a/ approximates \S Rq1a/, but the disjunct $r\in\S
nbh/.q$ of \S Mq0a/ is weaker than the disjunct $r\in\S nbh0/.q$ of \S
Rq1a/.  As a remedy, we postulate the invariant:
\begin{tab}
  \S Mq2:/ \> $ q\B\ in /\{26\dots\} \Land r\in \S nbh/.q $\\
  \> $ \Implies r\in \S nbh0/.q \Lor \T welcome/.q.r = \S job/.q
  \Lor \S copy/.r(q) = \S job/.q $ .
\end{tab}
Predicate \S Mq2/ is preserved at \T notify/, \T answer/, \T after/,
\T welcome/, \T hello/ because of \S Iq1/, \S Iq2a/, \S Iq5/, \S Iq8/,
\S Kq0a/, \S Kq1/, \S Kq5/.

The conjunction of \S Mq0a/ and \S Mq2/ is not strong enough to imply
\S Rq1a/.  We need yet another invariant. Indeed, predicate \S Rq1a/
is implied by \S Lq2/ and the new invariant
\begin{tab}
  \S Mq3:/ \> $ q\B\ in /\{26\dots\} \Land r\B\ in /\{25\dots\} $ \\
  \> $ \Implies q = r\Lor r \in\S nbh0/.q \Lor q\in\S prio/.r \Lor \S
  job/.q * \S job/.r$ .
\end{tab}
Predicate \S Lq2/ serves to eliminate the alternative $q\in\S prio/.r$
of \S Mq3/.  Here we see, as announced at the end of Section
\ref{queryalg}, that correctness of the registration algorithm relies
on the guard of line 25, the emptiness of \S prio/.

Predicate \S Mq3/ is preserved at lines 24 and 25 because of \S Mq0a/
and \S Iq2a/, \S Iq5/, \S Kq1/, \S Lq3/, \S Mq2/.  It is preserved at
\T welcome/ because of \S Iq2a/.

This concludes the proof of \S Rq1a/ of Section \ref{mx_proof} and
thus of \S Rq1/ and \S Rq0/ of Section \ref{safetyreq}.

\subsection{Invariants concluded} \label{invarProgress}

We conclude the section by developing some invariants needed for the
proofs of progress in the Sections \ref{nodead} and \ref{proofThm1}.
Primarily, we need additional invariants about \S need/, \S copy/, \S
prio/, because of the guards of the alternatives 25, 26, \T after/,
and \T prom/.

We need two invariants for \S need/ in the inner protocol:
\begin{tab}
\S Nq0:/ \> $ q < r \Land q\in\S need/.r \Implies q\in\S away/.r $ ,\\
\S Nq1:/ \> $ q < r \Land q\in \S need/.r \Land \S job/.q * \S job/.r
\Implies \T withdraw/.q.r > 0 $ .
\end{tab}
Predicate \S Nq0/ is inductive.  Predicate \S Nq1/ is preserved at 21
because of \S Iq5/, at 25 because of \S Iq1/, \S Iq4/, \S Jq6/ and \S
Jq7/, and at 28 because of \S Iq1/, \S Jq0/, \S Jq7/, and \S Nq0/.  It
is preserved at \T prom/ because of \S Iq4/, \S Jq0/, and \S Jq2/.

We also need a new invariant of the outer protocol:
\begin{tab}
  \S Nq2:/ \> $ \T notify/.q.r = \bot \Land \S copy/.r(q) = \S none/ 
  \Land \T welcome/.q.r \in\{ \bot, \S none/\} $ \\
  \> $ \Implies q\notin\S after/.r \Land \T withdraw/.q.r = 0 $ .
\end{tab}
Preservation of \S Nq2/ at 28, \T hello/, \T after/, and \T notify/
follows from \S Iq1/, \S Iq2/, \S Iq4/, \S Iq5/, \S Kq1/, \S Mq2/, and
the new postulate
\begin{tab}
\S Nq3:/ \> $ \T notify/.q.r \ne \S none/ $ .
\end{tab}
Predicate \S Nq3/ is preserved at 25 because of \S Iq5/. 

Next to \S Lq2/, we need a second invariant about \S prio/:
\begin{tab}
\S Nq4:/ \> $ q\in \S prio/.r \Implies \neg\,\S copy/.r(q) * \S job/.r $ .
\end{tab}
Predicate \S Nq4/ is preserved at 21, 28, and \T after/ because of \S
Lq2/. It is preserved at \T notify/ and \T welcome/ because of \S Iq8/
and \S Kq7/.  This concludes the construction of the invariants for
progress.

At this point, we incorporate the aborting commands of Section
\ref{s.abort}.  These commands preserve all invariants claimed.  In
most cases, they need the same auxiliary invariants as line 28,
because they share several actions with line 28, e.g., resetting \S
pc/, \S job/, \S nbh/, \S nbh0/.

We now summarize the argument for safety by forming the conjunction of
the universal quantifications of the predicates of the families \S
Iq*/, \S Jq*/, \S Kq*/, \S Lq*/, \S Mq*/, \S Nq*/ (the so-called
constituent invariants).  As verified mechanically, this conjunction
is inductive.  Such mechanical verification is relevant, because with
more than 40 invariants, the possibility of overlooking an unjustified
assumption or a clerical error is significant.  As each of the
constituent invariants is a consequence of the conjunction, each of
them is itself invariant, as are all logical consequences of them. In
particular, the mutual exclusion predicate \S Rq0/ is invariant.  This
concludes the proof that the algorithm satisfies \S Rq0/.

One may wonder whether the constituent invariants are independent. We
do believe so, but we have no suitable way to verify this. 

\section{Progress} \label{nodead}

The algorithm satisfies strong progress properties.  In this
section, we introduce and formalize the progress properties.

We introduce weak fairness in Section \ref{intro-wf}.  Section
\ref{execform} contains the formalization in (linear-time) temporal
logic.  Section \ref{inf_live} formalizes weak fairness.  
In Section \ref {s.liveness}, we introduce absence of localized
starvation to unify starvation freedom and concurrency as announced in
Section \ref{progress}, and we announce the main progress result: 
Theorem \ref{liveness}. 

As a preparation for the proof of this result, we derive in Section
\ref{conflicts} some invariants that relate disabledness to the
occurrence of conflicts.  These invariants are used in Section
\ref{nolocdead} to prove absence of localized deadlock.  This result
could be proved as a simple consequence of the main theorem.  We prove
it independently, because it nicely represents the main ideas of the
proof of Theorem \ref{liveness} in a simpler context.

\subsection{Weak fairness} \label{intro-wf}

First, however, weak fairness needs an explanation.  Roughly speaking,
a system is called weakly fair if, whenever some process from some
time onward always can do a step, it will do the step.  Yet if a
process is idle, it must not be forced to be interested in \S
CS/. Similarly, if a process is waiting a long time in the entry
protocol, we do not want it to be forced to abort the entry
protocol. We therefore do not enforce the environment to do steps.  We
thus exclude the environment from the weak fairness conditions.

Formally, we do not argue about the fairness of systems, but
characterize the executions they can perform. Recall that an
\emph{execution} is an infinite sequence of states that starts in an
initial state and for which every pair of subsequent states satisfies
the next-state relation.  The next-state relation is defined as the
union of a number of step relations.  

An execution is called \emph{weakly fair} for a step relation iff,
when the step relation is from some state onward always enabled,
it will eventually be taken.  For example, if some process $p$ is at
line 22, we expect that $p$ will eventually execute line 22.  If some
message $m$ from $q$ to $r$ is in transit, we expect that $r$
eventually receives message $m$.  By imposing weak fairness for some
step relations, we restrict the attention to the executions that are
weakly fair for these step relations.

We partition the steps of our algorithm in four classes (compare the
end of Section \ref{s.abort}).  Firstly, we have the 5
\emph{environment steps} 21, 31, \T ab24/, \T ab25/, \T
ab26/. Secondly, we have the 7 \emph{forward steps} of the lines 22,
23, 24, 25, 26, 27, 28. The third class consists of the 12
\emph{triggered steps}, the ten message reception steps of the
processes and the sites and the two ``delayed answers'': \T after/ and
\T prom/.  The fourth class consists of the two \emph{lowering
  steps} at the lines 32 and 33.

We distinguish the five environment steps because we don't want to
grant them weak fairness.  A process that is idle for a long time must
not be forced to get a job.  A process that stays in its entry
protocol must not be forced to abort it.

\subsection{Formalization in temporal logic} \label{execform}

Let $X$ be the state space of the algorithm.  This is the Cartesian
product of the private state spaces of the processes and the sites,
together with the sets of messages in transit.  Executions are
infinite sequences of states, i.e., elements of the set $X^\omega$.
For a state sequence $\S xs/\in X^\omega$, we write $\S xs/(n)$ for
the $n$th element of \S xs/.  Occasionally, we refer to $\S xs/(n)$ as
the state at time $n$. For a programming variable \T v/, we write $\S
xs/(n).\T v/$ for the value of \T v/ in state $\S xs/(n)$.

For a set of states $U\subseteq X$, we define $\sem{U}\subseteq
X^\omega $ as the set of infinite sequences \S xs/ with $\S xs/(0)\in
U$. For a relation $A\subseteq X^2$, we define $\sem{A}_2\subseteq
X^\omega$ as the set of sequences \S xs/ with $(\S xs/(0),\S
xs/(1))\in A$.

For $\S xs/\in X^\omega$ and $k\in \Nat$, we define the 
sequence $\S drop/(k, \S xs/)$ by $\S drop/(k, \S xs/)(n) = \S xs/(k+n)$.  For a
subset $P\subseteq X^\omega$ we define $\Box P$ (\emph{always} $P$)
and $\Diamond P$ (\emph{eventually} $P$) as the subsets of $X^\omega$
given by
\begin{tab}
\> $\S xs/\in \Box P \EQ (\all k\in\Nat: \S drop/(k, \S xs/)\in P) $  ,\\
\> $\S xs/\in \Diamond P \EQ (\ex k\in\Nat: \S drop/(k, \S xs/)\in P) $  .
\end{tab}

We now apply this to the algorithm.  We write $\S init/\subseteq X$
for the set of initial states and $\S step/\subseteq X^2$ for the next
state relation. Following \cite{AbL91}, we use the convention that
relation \S step/ is reflexive (contains the identity relation).  An
\emph{execution} is an infinite sequence of states that starts in an
initial state and in which each subsequent pair of states is connected
by a step. The set of executions of the algorithm is therefore
\begin{tab}
\> $ \S Ex/ = \sem{\S init/}\cap \Box\sem{\S step/}_2$ .
\end{tab}

If $J$ is an invariant of the system, it holds in all states of every
execution. We therefore have $\S Ex/\subseteq \Box{J}$.

We define $(q\B\ at /\ell)$ to be the subset of $X$ of the states in
which process $q$ is at line $\ell$. An execution in which process $q$
is always eventually at line 21, is therefore an element of
$\Box\Diamond\sem{q\B\ at /21}$.  

\begin{remark} Note the difference between $\Box\Diamond\sem{U}$ and
  $\Diamond\Box\sem{U}$.  In general, $\Box\Diamond\sem{U}$ is a
  bigger set (a weaker condition) than $\Diamond\Box\sem{U}$. The
  first set contains all sequences that are infinitely often in $U$,
  the second set contains the sequences that are from some time onward
  eternally in $U$.  
\end{remark}

\subsection{Weak fairness formalized} 
\label{inf_live}

For a relation $R\subseteq X^2$, we define the \emph{disabled} set
$D(R)=\{x\mid \all y: (x,y)\notin R\}$. Now \emph{weak fairness}
\cite{Lam94} for $R$ is defined as the set of executions in which
$R$ is infinitely often  disabled or taken:
\begin{tab}
\> $ \S wf/(R) \IS \S Ex/\cap 
\Box\Diamond (\sem{D(R)} \cup \sem{R}_2) $ .
\end{tab}

For our algorithm, the next state relation $\S step/\subseteq X^2$ is
the union of the identity relation on $X$ (because \S step/ should be
reflexive) with the relations $\S step/(p)$ that consists of the state
pairs $(x,y)$ where $y$ is a state obtained when process $p$ does a
step starting in $x$.  In accordance with Section \ref{intro-wf}, the
steps that process $p$ can do are partioned as:
\begin{tab}
\> $\S step/(p) \IS \S env/(p)\cup \S fwd/(p)\cup \S trig/(p) 
\cup\S low/(p) $ ,
\end{tab}
where $\S env/(p)$, $\S fwd/(p)$, $\S trig/(p)$, $\S low/(p)$ consist
of the environment steps, the forward steps, the triggered steps, and
the lowering steps of $p$, respectively.  The set of triggered steps
of $p$ is a union:
\begin{tab}
\> $ \S trig/(p) = \bigcup_{q,m} \S rec/(m, q, p) \cup 
\bigcup_{s, n} \S sit/(n,p,s) $ , 
\end{tab}
where $\S rec/(m,q,p)$ consists of the steps where $p$ receives
message $m$ from $q$. Note that we take the union here over all
processes $q$ and the eight message alternatives $m$, including the
delayed answers \T after/ and \T prom/, and $\S sit/(n,p,s)$ consists
of the four commands for message keys $n$ between process $p$ and site
$s$.

The set $\S wf/(\S fwd/(p))$ consists of the executions for which
every forward step of process $p$ is infinitely often disabled or
taken.

The set $\S wf/(\S rec/(m,q,p))$ consists of the executions for which
every message $m$ in transit from $q$ to $p$ is eventually received.

An execution is defined to be \emph{weakly fair for} process $p$ if it
is weakly fair for the forward steps of $p$, for the lowering steps of
$p$, and for all messages with $p$ as destination or source, as
captured in the definition
\begin{tab}
  \> $ \S Wf/(p) = \S wf/(\S fwd/(p))\cap \S wf/(\S low/(p))\cap 
  \bigcap_{s,n} \S wf/ (\S sit/(n,p,s) $\\
  \>\>\> $ \cap \bigcap_{q, m}(\S wf/(\S rec/(m,q,p))\cap \S wf/(\S
  rec/(m,p,q))) $ ,
\end{tab}
where $s$ ranges over the sites, $n$ over the messages to and from
sites, $q$ over the processes, and $m$ over the eight message
alternatives.

\subsection{Absence of localized starvation} \label{s.liveness}

As discussed in Section \ref{progress}, there are two progress
properties to consider: starvation freedom, which means that every
process that needs to enter \S CS/ will eventually do so unless its
entry protocol is aborted, and concurrency, which means that every
process that needs to enter \S CS/ and does not abort its entry
protocol will eventually enter \S CS/ unless it comes in eternal
conflict with some other process.

For starvation freedom we assume weak fairness for the forward and
lowering steps and all triggered steps of all processes.  For
concurrency for process $p$, we only need weak fairness for the
forward and lowering steps of process $p$ itself and for all triggered
steps in which process $p$ is involved.  As we want to verify both
properties with a single proof, we unify them in the concept of
{absence of localized starvation}.  The starting point for the
unification is that the concepts differ in the sets of steps that
satisfy weak fairness.

Let $W$ be a nonempty set of processes.  \emph{Absence of
  $W$-starvation} is defined to mean that weak fairness for all
forward and lowering steps of the processes in $W$ and for all
triggered steps that involve processes in $W$ implies that every
process in $W$ eventually comes back to line 21, unless it comes in
eternal conflict with some process outside $W$.  The special case that
$W$ is the set of all processes is starvation freedom.  The special
case that $W$ is a singleton set is concurrency \cite{ChM84,Rhe98}.

We speak of \emph{absence of localized starvation} if absence of
$W$-starvation holds for every nonempty set $W$.  The aim is thus to
prove that the algorithm satisfies absence of localized starvation. In
order to do so, we formalize the definition in terms of temporal
logic.

We define the set of states where
processes $q$ and $r$ are in conflict as $q\bowtie r$ by
\begin{tab}
\> $ q\bowtie r \EQ \neg\,\S job/.q * \S job/.r$ .
\end{tab}
An execution where $q$ is eventually in eternal conflict with $r$ is
therefore an element of $\Diamond \Box\sem{q \bowtie r}$.  Absence of
$W$-starvation thus means that all ``sufficiently fair'' executions
are elements of the set
\begin{tab}
  \> $ (\bigcap_{q\in W} \Box\Diamond\sem{q\B\ at /21})\: \cup\:
  (\bigcup_{q\in W, r\notin W} \Diamond \Box\sem{q \bowtie r})$ .
\end{tab}

An execution is defined to be weakly fair for a nonempty set of
processes $W$ if it is weakly fair for each of them:
\begin{tab}
  \> $ \S Wf/(W) = \bigcap_{p\in W} \S Wf/(p) $ . 
\end{tab}
Absence of localized starvation thus is the following result:

\begin{theorem} \label{liveness} \ $\S Wf/(W) \subseteq (\bigcap_{q\in
    W} \Box\Diamond\sem{q\B\ at /21}) \cup (\bigcup_{q\in W, r\notin
    W} \Diamond \Box\sem{q \bowtie r})$ holds for every nonempty set
  $W$ of processes.
\end{theorem}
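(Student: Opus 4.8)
The plan is to prove the contrapositive of a localized-progress statement: assume a "sufficiently fair" execution in which some process $p\in W$ is \emph{not} infinitely often at line 21, and derive that some $q\in W$ is eventually in eternal conflict with some $r\notin W$. Since $\S Wf/(W)$ guarantees weak fairness for all forward, lowering and triggered steps of every process in $W$, and since the loop $21\dots 28$ has no internal cycles except at line 21, failing to return to line 21 means $p$ is eventually stuck forever at one of the waiting lines $22,23,24,25,26$ (line 27 is $\S CS/$, which by the $\Gamma$-bound assumption is left; the forward steps at $24,\dots,28$ are eventually taken once enabled by weak fairness). So the core task is: if a nonempty subset of $W$ is eventually permanently blocked, exhibit a conflict edge leaving $W$.

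**Key steps, in order.**
First I would establish a "progress normal form": using the safety invariants (the families $\S Iq*/,\S Jq*/,\S Kq*/,\S Lq*/,\S Mq*/,\S Nq*/$) together with weak fairness for messages, show that a process waiting at line $22,23,24$ or at $33$/the lowering thread is only \emph{transiently} blocked — its guard becomes true after all outstanding messages (\T answer/, \T welcome/, \T ack/, \T done/, and \T hello//\T withdraw/ round-trips) are delivered and handled, and those handlers are themselves weakly-fair triggered steps of processes in $W$ (for the handlers executed by processes outside $W$ one must check they are not the bottleneck; but the relevant triggered steps — e.g.\ \T after/, the \T ack/ sender — are steps of members of $W$ when the waiter is in $W$, or are covered because $\S Wf/(W)$ includes $\S wf/$ for messages \emph{from} members of $W$). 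This reduces the blocking point to lines $25$ (waiting for $\S prio/=\emptyset$), $26$ (waiting for $\S need/=\emptyset$), i.e.\ genuine conflict-driven waiting. Second, I would use invariants from Section \ref{conflicts} (announced but not yet shown in the excerpt — I may assume them, as they are promised) that characterize disabledness of the guards at $25$ and $26$ in terms of the existence of a conflicting process that has itself not yet progressed: $q\in\S prio/.r$ implies $\neg\,\S copy/.r(q)*\S job/.r$ (this is \S Nq4/), and $r\in\S need/.q$ with the structure of \S Jq0/, \S Jq2/, \S Nq0/, \S Nq1/ implies a conflict partner that is in its inner protocol or has an undelivered \T withdraw/. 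Third, I would run a \emph{minimal-counterexample / well-founded descent} argument: among the processes in $W$ that are eventually permanently blocked, the priority bias of the inner protocol (lower processes have priority) and the middle-layer guarantee (a conflicting neighbour passing $p$ at $26$ must then wait for $p$ at $25$, as stated in Section \ref{middle}) give a strict ranking — a lowest permanently-blocked process in $W$ cannot be blocked by another process in $W$, so its blocker is outside $W$; chasing the blocking chain and using that messages from outside-$W$ processes are delivered (they are messages \emph{to} a member of $W$, covered by $\S Wf/(W)$), the only way the block persists forever is that the outside blocker stays in conflict forever, yielding $\Diamond\Box\sem{q\bowtie r}$ with $q\in W$, $r\notin W$.

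**Main obstacle.**
The hard part will be the descent argument at line $26$ combined with the middle layer: one must show that the set of permanently-blocked processes of $W$, ordered appropriately, cannot form a cycle of mutual waiting entirely inside $W$, and that the waiting is actually "used up" — i.e.\ that a process blocked at $25$ by $q\in\S prio/.r$ will see $q$ execute its \T withdraw/ (which removes $q$ from $\S prio/.r$) \emph{unless} $q$ is itself permanently blocked, and a process blocked at $26$ by a higher conflicting $q\in\S need/$ gets \T gra/ because $q$, having priority, eventually reaches line $26$ or aborts. Formalizing this as a genuine well-founded measure (rather than an informal "follow the chain") is delicate because the blocking relation is dynamic and mixes two waiting levels; I would likely need an auxiliary lemma, proved separately by the same temporal-logic machinery (and probably appearing as the "absence of localized deadlock" result promised in Section \ref{nolocdead}), stating that no maximal chain of eternally-waiting processes stays within any fixed finite set without an external eternal conflict, and then feed the finiteness of $\S nbh/$ and the invariant \S Rq1/ (every process outside $\S nbh0/.q$ is compatible with $q$) to confine the chain to a finite set and close the induction.
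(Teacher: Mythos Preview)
Your overall architecture matches the paper's: reduce to eternal blocking at one of the lines 22--26, dispose of 22--24 by message-chasing (the paper's ``shrinking argument'' in Sections~\ref{progress22}--\ref{progress24}), and for 25 and 26 run a lowest-process argument inside $W$ (the paper's Section~\ref{proofLive}). The waiting invariants you invoke are exactly the $\S Waq*/$ of Section~\ref{conflicts}. So the strategy is right.

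There is, however, a genuine gap at line~26. Your descent argument needs a \emph{specific} process $q$ that is \emph{eternally} in $\S need/.p$, not merely that $\S need/.p$ is always nonempty. These are not the same: $\S need/.p$ can \emph{grow} while $p$ sits at~26, because whenever $p$ executes the \T prom/ alternative for a lower conflicting neighbour it inserts that neighbour into $\S need/.p$. A lower neighbour $q$ could in principle leave $\S need/.p$ (via its \T withdraw/) and re-enter it (via a fresh \T notify/ and \T prom/) repeatedly, so pigeonhole over the finite set $\S nbh0/.p$ only gives you a $q$ that is in $\S need/.p$ infinitely often, which is too weak for the minimality step. The paper closes this with a per-neighbour variant function $\S vf/(q,p)\in\{0,\dots,4\}$ built on the stable predicate $\S bf/(q,p)$ (Section~\ref{progress26}): once $p$'s \T notify/ has reached $q$, $\S vf/(q,p)$ is nonincreasing, hence eventually constant, hence the truth value of $q\in\S need/.p$ is eventually constant for every $q\in\S nbh0/.p$, hence $\S need/.p$ itself is eventually constant. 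This is the technical heart of the proof and is absent from your proposal; your ``middle-layer guarantee'' intuition is the right ingredient, but it needs to be packaged as this monotone measure to yield an eternal blocker.

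Two smaller points. First, you have the priority direction reversed in one sentence: in this algorithm \emph{lower} processes have priority, so when $p<q$ and $q\in\S need/.p$, process $q$ owes $p$ a \T gra/ and withholds it only while $q\B\ in /\{27,28\}$ in conflict with $p$ (this is $\S Waq3/$); when $q<p$, process $p$ has already granted priority to $q$ and waits for $q$'s \T withdraw/ ($\S Waq2/$). Second, the paper explicitly notes that absence of localized deadlock (Theorem~\ref{thm-no-deadlock}) is \emph{not} used in the proof of Theorem~\ref{liveness}; it is a safety statement about a single silent state and does not by itself give you the eventual stabilization of $\S need/.p$ that the liveness argument requires.
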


The proof of the Theorem is given in Section \ref{proofThm1}.

\subsection {Disabledness and conflicts} 
\label{conflicts}

As a preparation of the proof of Theorem \ref{liveness}, we use the
invariants obtained in Section \ref{algorithm} to derive four
so-called \emph{waiting invariants} that focus on disabledness of
processes in relation to the occurrence of conflicts.  Forward steps
can be disabled at the lines 23, 24, 25, 26 by nonemptiness of \S
curlist/, \S wack/, \S prio/, \S need/, respectively.  Message
reception is disabled when there is no message.

Let $\S dAfter/ (q,r)$ and $\S dProm/(q,r)$ be the conditions,
respectively, that the alternatives \T after/ and \T prom/ for sending
\T ack/ or \T gra/ from $q$ to $r$ are disabled:
\begin{tab}
\> $ \S dAfter/(q, r) \EQ r\notin\S after/.q\Lor 
\S copy/.q(r) = \S none/ $ ,\\
\> $ \S dProm/(q, r)\EQ r\notin\S prom/.q $\\
\>\>\> $ \Lor\,(q \B\ in /\{27\dots\}\Land 
\neg\,\S job/.q * \S copy/.q(r) ) $ .
\end{tab}

For emptiness of $\S wack/.q$, the invariants \S Iq2/ and \S Nq2/
imply the {waiting invariant}:
\begin{tab}
\S Waq0:/ \> $ \T withdraw/.q.r = \T ack/.r.q = 0\Land 
\T notify/.q.r =  \bot $\\
\> $ \Land\: \T welcome/.q.r \in\{\bot, \S none/\} \Land
\S dAfter/(r, q) \Implies r\notin\S wack/.q $ .
\end{tab}

For emptiness of $\S prio/.q$, the invariants \S Kq7/, \S Lq2/, \S
Mq2/, and \S Nq4/, together with \S Iq4/, \S Iq5/, \S Iq7/, and \S
Iq8/ imply the waiting invariant
\begin{tab}
\S Waq1:/ \> $ r\in \S prio/.q \Land \T withdraw/.r.q = 0 
\Land \T welcome/.q.r \in\{ \bot, \S none/\} $ \\
\> $ \Implies r \B\ in / \{26 \dots\}\Land q\bowtie r$ .
\end{tab}

For emptiness of $\S need/.q$, we are forced to make a case
distinction.  Using the invariants \S Nq0/ and \S Nq1/ together with
\S Iq1/, \S Jq0/, and \S Jq7/, we obtain the waiting invariant
\begin{tab}
\S Waq2:/ \> $ r < q \Land r\in \S need/.q \Land \T withdraw/.r.q = 0 
\Implies r\B\ in / \{26\dots\}\Land q\bowtie r$ .
\end{tab}

It follows from \S Iq4/, \S Iq5/, \S Jq0/, and \S Jq2/, that we have
\begin{tab}
\S Waq3:/ \> $ q < r \Land r\in\S need/.q\Land \T gra/.r.q = 0 
\Land \T notify/.q.r = \bot $ \\
\> $ \Land \S dProm/(r, q) \Implies r \B\ in /\{27\dots\}\Land q \bowtie r$ .
\end{tab}

\subsection{Intermezzo: absence of localized deadlock} \label{nolocdead}

In this section we prove absence of localized deadlock.  This result
is not useful for the proof of Theorem \ref{liveness}, and it follows
from Theorem \ref{liveness}.  Yet, an independent proof of the result
is a good preparation of the more complicated proof that follows.

Informally speaking, absence of localized deadlock means that, when
none of the processes of some set $W$ of processes can do a forward or
lowering or triggered step, and some of them are not at line 21, then
at least one of them is in conflict with a process not in $W$. It is
thus a safety property.

The concept is defined as follows. We define a process $p$ to be
\emph{silent} in some state when every forward or lowering or
triggered step of $p$ is disabled, and no process or site can do a
triggered step that sends a message to $p$.  We define $p$ to be
\emph{locked} when it is silent and not at line 21.

Let $W$ be a set of processes (willing to do steps).  The set $W$ is
said to be \emph{silent} if all its processes are silent.  It is said
to be \emph{locked} if it is silent and contains locked processes.
Absence of $W$-deadlock is the assertion that, if $W$ is locked, then
it contains some process that is in conflict with a process not in
$W$.  Absence of localized deadlock is absence of $W$-deadlock for
every set $W$.  This is our next result:

\begin{theorem} \label{thm-no-deadlock} 
  Assume that a set $W$ of processes is locked. Then there are
  processes $q\in W$ and $r\notin W$ with $q \bowtie r$.
\end{theorem}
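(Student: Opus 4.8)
The plan is to argue by contradiction: suppose $W$ is locked but no process of $W$ conflicts with any process outside $W$. Since $W$ is locked, it contains a locked process, so there is a non-empty set of processes in $W$ that are not at line $21$; among these I would pick a process $p$ of smallest index (using that the processes are natural numbers, so every non-empty set has a least element). The goal is to derive a contradiction from silence of $p$ together with the waiting invariants \S Waq0/--\S Waq3/ and the safety invariants from Section \ref{algorithm}.

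First I would exploit silence to clean up the message channels. Because $p$ is silent, every triggered step of $p$ is disabled and no process or site can send a message to $p$; in particular there is no \T notify/, \T ack/, \T gra/, \T welcome/, \T withdraw/, \T hello/, \T answer/, \T done/ in transit to $p$, and \T after/ and \T prom/ for $p$ are disabled, i.e., $\S dAfter/(p,r)$ and $\S dProm/(p,r)$ hold for all $r$. Moreover, for a neighbour $r\in W$, silence of $r$ likewise gives that there is no message in transit from $p$ to $r$ of any relevant key (since such a message would enable a receive step of $r$), so in particular $\T withdraw/.p.r = 0$, $\T notify/.p.r = \bot$, $\T gra/.p.r=0$, and $\T welcome/.p.r\in\{\bot,\S none/\}$; and symmetrically for messages from $r$ to $p$. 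The one subtlety is that a neighbour $r$ need not lie in $W$, so I would split on whether $r\in W$ or not, and in the latter case fall back on the assumed absence of conflict between $p$ and $r$.

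Now I would argue that $p$ cannot be stuck at any of lines $22$--$26$, nor be idle anywhere but $21$, which contradicts $p$ being locked. At line $23$, emptiness of \S curlist/ is blocked only by an outstanding \T answer/ to $p$, which silence excludes (via \S Kq0/). At line $24$, nonemptiness of $\S wack/.q$ would, by \S Waq0/, require some message from or to $p$ or a disabled \T after/ failing — but with all those channels empty and $\S dAfter/(r,p)$ holding, \S Waq0/ forces $\S wack/.p=\emptyset$. At line $25$, if $r\in\S prio/.p$ then \S Waq1/ gives $q\bowtie r$, i.e. $p\bowtie r$; if $r\in W$ this contradicts minimality-independent absence of internal conflict? — no, here I must be careful: \S Waq1/ only yields a conflict, so if $r\notin W$ this already contradicts our assumption, and if $r\in W$ I use that silent conflicting processes would themselves be a source of trouble; the cleanest route is that the overall "no conflict anywhere touching $W$" hypothesis (conflicts are symmetric, \S Rq0/-style) rules out $p\bowtie r$ outright, so $\S prio/.p=\emptyset$. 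At line $26$, I split $\S need/.p$ into its lower part $\{r<p\}$ and higher part $\{r>p\}$. For $r<p$: by \S Waq2/, with $\T withdraw/.r.p=0$, we get $p\bowtie r$ and $r\B\ in /\{26\dots\}$ — but $r<p$ and $r$ not at $21$ contradicts the choice of $p$ as the least non-idle locked process (and $r$ is silent, being in $W$, or else $r\notin W$ and the conflict contradicts the hypothesis directly). For $r>p$: by \S Waq3/, with $\T gra/.r.p=0$, $\T notify/.p.r=\bot$, and $\S dProm/(r,p)$ all secured by silence, we again get a conflict $p\bowtie r$, excluded as before.

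The main obstacle I anticipate is the bookkeeping around neighbours that lie outside $W$ and the asymmetric treatment of the lower and higher parts of \S need/: the lower part is handled by the minimality of $p$, but this only works because a blocking lower process $r$ is forced by \S Waq2/ to be simultaneously in $\{26\dots\}$ and (by silence or by the conflict hypothesis) unable to make progress, so it too is locked and not at $21$ — contradicting the least-index choice. Getting the hypotheses of each waiting invariant discharged from "silent" in the right order, and making sure the conflict-free assumption is applied with the correct symmetry, is where the care is needed; the rest is a routine walk through lines $22$--$26$.
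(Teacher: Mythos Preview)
The overall shape is right --- silence empties the message channels, the waiting invariants then yield conflicts, and a minimality argument handles the lower neighbours --- but there is a real gap in your treatment of line 25.

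Your contradiction hypothesis only forbids conflicts \emph{across} the boundary of $W$; it says nothing about conflicts \emph{inside} $W$. When your globally minimal locked process $p$ sits at line 25 with some $r\in\S prio/.p$, invariant \S Waq1/ gives $r\B\ in /\{26\dots\}$ and $p\bowtie r$. If $r\notin W$ you are done, but if $r\in W$ (so $r$ is at 26, by silence) nothing breaks: $r$ may well satisfy $r>p$, so minimality of $p$ among \emph{all} locked processes is useless, and the sentence ``the overall `no conflict anywhere touching $W$' hypothesis \dots\ rules out $p\bowtie r$ outright'' is simply false for $r\in W$. The paper avoids this by a different case split: it first asks whether $W$ contains any process at line 26. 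If so, it takes the \emph{lowest process at line 26} (not the lowest locked process overall) and runs the line-26 argument; minimality among those at 26 is exactly what closes the $r<q$ subcase. If no one in $W$ is at 26, then for a process blocked at 25 the blocker $r\B\ in /\{26\dots\}$ cannot lie in $W$ at all, and the conflict is automatically cross-boundary. Your argument is repairable in the same way, but the single global minimum as written does not suffice.

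A smaller point: your proposed ``fallback on the assumed absence of conflict'' for neighbours $r\notin W$ is circular as a way to discharge the message-absence hypotheses of \S Waq0/--\S Waq3/, since those hypotheses must hold \emph{before} the invariant produces a conflict. For such $r$ you must argue from silence of $p$ alone: no messages to $p$ are in transit, and no process can do a triggered step that would send to $p$, which yields $\S dAfter/(r,p)$, $\S dProm/(r,p)$, $\T hello/.p.r=0$, etc.; the remaining channel conditions come from the structural invariants (\S Iq2/, \S Kq1/, \S Jq2/). The paper does this uniformly, once, before the case analysis.
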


\begin{proof}
  The algorithm clearly satisfies the invariant that every process is
  always $\{21\dots 28\}$. The processes in $\{27, 28\}$ can do a
  forward step. Therefore, all processes in $W$ are in $\{21\dots
  26\}$. As $W$ contains locked processes, there are processes in $W$
  at 22--26, waiting for $\S pcr/\leq 32$ or emptiness of \S curlist/,
  \S wack/, \S prio/, \S need/, respectively.  If $p\in W$ has $\S
  pc/.p =22$ then $\S pcr/.p > 32$ and hence $\S pcr/.p = 33$;
  therefore process $p$ can do a lowering step.  This implies that $W$
  has no processes at line 22.

  As all triggered steps for processes in $W$ are disabled, we have
  $\T asklist/.p.s = \T answer/.s.p = \bot $ for all $p\in
  W$. By \S Kq0/, this implies $\S curlist/.p=\emptyset$ for all $p\in
  W$.  It follows that $W$ has no processes at line 23. 

  Similarly, for all pairs $q$, $r$ with $q\in W$ or $r\in W$, the
  values of $\T notify/.q.r$ and $\T welcome/.q.r$ are $\bot$ and the
  values of $\T hello/.q.r $, $ \T withdraw/.q.r$, $\T ack/.q.r$, and
  $\T gra/.q.r$ are all 0.  Also, $\S dAfter/(q,r)$ and $\S dProm/(q,
  r)$ hold.  This simplifies the waiting invariants \S Waq*/ of
  Section \ref{conflicts} considerably. In fact, for $q\in W$ and $r$
  arbitrary, we obtain:
\begin{tab}
\S Wax0:/ \> $ r\notin\S wack/.q $ ,\\
\S Wax1:/ \> $ r\in\S prio/.q\Implies r\B\ in /\{26\dots\}
\Land q\bowtie r $ ,\\
\S Wax2:/ \> $ r < q \Land r\in\S need/.q\Implies r\B\ in /\{26\dots\}
\Land q\bowtie r $ ,\\
\S Wax3:/ \> $ q < r \Land r\in\S need/.q\Implies r\B\ in /\{27\dots\}
\Land q\bowtie r $ .
\end{tab}
It follows from \S Kq1/ and \S Wax0/ that $W$ has no processes
disabled at 24.

Assume that $W$ contains processes disabled at 26. Let $q$ be the
lowest process in $W$ waiting at 26.  Because $q$ is disabled at 26,
the set $\S need/.q$ is nonempty, say $r\in\S need/.q$. It follows
from \S Iq0/ and \S Jq0/ that $r\ne q$. Then \S Wax2/ and \S Wax3/
imply $q\bowtie r$.  Moreover, if $q < r$, then $r$ is in $\{27\dots\}$
by \S Wax3/, so that $r\notin W$. On the other hand, if $r < q$, then
$r$ is in $\{26\dots\}$ by \S Wax2/. If $r\in W$, then $r$ would be at
26, contradicting the minimality of $q$. This proves that $r\notin W$
in either case.

Assume that $W$ contains no processes disabled at 26. Then it has some
process $q\in W$ disabled at 25. Because $q$ is disabled at 25, the
set $\S prio/.q$ is nonempty, say $r\in\S prio/.q$. Now \S Wax1/
implies that $r$ is in $\{26\dots\}$ and $q\bowtie r$. Because $r$ is
in $\{26\dots\}$, it is not in $W$. 
\end{proof}

\section{Verification of Progress} \label{proofThm1}

In this section, we prove progress of the algorithm.  We prepare the
proof of Theorem \ref {liveness} by concentrating on what can be
inferred from weak fairness for a single process.

Following \cite{ChM88}, for sets of states $U$ and $V$, we define
$U\B\ unless /V$ to mean that every step of the algorithm with
precondition $U\setminus V$ has the postcondition $U\cup V$.  The
algorithm has the obvious \B unless/ relations:
\begin{tabn} \label{unlessCen}
\>\+ $ (p\B\ at /\ell)\B\ unless /(p\B\ at /\{\ell + 1 \})$ 
\ for $\ell\in\{22, 23, 27\}$ ,\\
$ (p\B\ at /\ell)\B\ unless /(p\B\ in /\{21, \ell + 1 \})$ 
\ for $\ell\in\{24, 25, 26\} $ ,\\
$ (p\B\ at /28)\B\ unless /(p\B\ at /21)$ .
\end{tabn}
The jumps from 24, 25, 26 back to 21 are due to the aborting commands
of Section \ref{s.abort}.  In any case, all cycles go through line 21:

\setlength{\unitlength}{1.2mm} 

\begin{picture}(120,27)(0,2) 
  \multiput(10,0)(0,20) {2}{\multiput(4.2,4.2)(10,0){3} {\bol}}
  \multiput(14.2,14.2)(20,0){2} {\bol}
  \multiput(17,5)(10,0) {2}{\vector(1,0){6}}
  \multiput(23,25)(10,0) {2}{\vector(-1,0){6}}
  \multiput(15,23)(0,-10) {2}{\vector(0,-1){6}}
  \multiput(35,7)(0,10) {2}{\vector(0,1){6}}
  \put(33,6){\vector(-2,1){16}}
  \put(33,15){\vector(-1,0){16}}
  \put(33,24){\vector(-2,-1){16}}
  \put(10,14){21}
  \put(10,4){22}
  \put(10,24){28}
  \put(37,14){25}
  \put(37,4){24}
  \put(37,24){26}
  \put(23.5,6.5){23}
  \put(23.5,21.5){27}
\end{picture}

Weak fairness of process $p$ itself suffices for progress at the lines
27 and 28, because the forward commands at these lines are always
enabled. We thus have
\begin{tabn} \label{notEt22_27_28}
\> $ \S Wf/(p)\cap \Diamond\Box\sem{p\B\ at / \ell} = \emptyset $
\ for $\ell\in\{27, 28\}$ . 
\end{tabn}
We treat progress at the lines 22 up to 26 in the Sections \ref
{progress22} up to \ref {progress26}.  The proof of Theorem \ref
{liveness} is concluded in Section \ref{proofLive}.  Progress of the
lowering thread is proved in Section \ref{progressLow}.

\subsection{Progress at line 22} \label{progress22}

In order to prove progress at line 22, in view of the guard $\S pcr/
\leq 32$, we first need to prove progress of the lowering thread at
line 33.

Assume that the lowering thread of process $p$ remains eternally at
line 33, say from time $n_0$ onward.  From this time onward, the
finite set $\S reglist/.p$ is only modified by removing elements from
it (by \T done/).  Therefore, $\S reglist/.p$ becomes eventually
constant.  If it becomes eventually empty, the step of line 33 is
eventually always enabled, so that process $p$ will move to line 31,
contradicting the assumption. Therefore, there is a site $s$ that
remains eternally in $\S reglist/.p$.  By the invariant \S Kq2/, it
therefore follows that, from time $n_0$ onward, we have $\T lower/.p.s
\ne \bot$ or $ \T done/.s.p > 0$. By weak fairness, any message $\T
lower/.p.s$ is eventually received, and never sent again.  The same
holds for $\T done/.s.p$.  This is a contradiction, thus proving that
\begin{tabn} \label{notEt33}
\> $ \S Wf/(p)\cap \Diamond\Box\sem{p\B\ at / 33} = \emptyset $ .
\end{tabn}
This kind of argument will be used again.  We refer to it as a
\emph{shrinking argument}.

Now assume that process $p$ remains eternally at line 22. Its lowering
thread, if at line 33, will leave line 33, and never come back again
because of the guard of line 32.  Therefore process $p$ is eventually
always enabled at line 22.  By weak fairness, it will eventually leave
line 22, a contradiction.  This proves 
\begin{tabn} \label{notEt22}
\> $ \S Wf/(p)\cap \Diamond\Box\sem{p\B\ at / 22} = \emptyset $ .
\end{tabn}

\subsection {Progress at line 23} \label{progress23}

For some process $p$, assume that \S xs/ is an execution in
$\Diamond\Box\sem{p\B\ at /23}$, weakly fair for $p$.  By the
shrinking argument used above, the set $\S curlist/.p$ is eventually
constant.  Using the invariants \S Kq0/ and \S Kq3/, and weak fairness
for the messages \T asklist/ and \T answer/, we get that $\S
curlist/.p$ is eventually empty.  Then using weak fairness for its
forward steps, we see that process $p$ eventually leaves line 23,
contradicting the assumption.  This proves that
\begin{tabn} \label{notEt23}
\> $ \S Wf/(p)\cap \Diamond\Box\sem{p\B\ at / 23} = \emptyset $ .
\end{tabn}

\subsection{Progress at line 24} \label{progress24}

For some process $p$, assume that \S xs/ is an execution in
$\Diamond\Box\sem{p\B\ at /24}$, weakly fair for $p$.  From some time
$n_0$ onward, process $p$ is and remains at line 24. By the shrinking
argument, the sets $\S pack/.p$ and $\S wack/.p$ are eventually
constant.  By the invariant \S Kq1/ and weak fairness of the messages
\T hello/ and \T welcome/, the set $\S pack/.p$ is eventually empty.
Using weak fairness of the forward steps of $p$, we get that $\S
wack/.p$ remains nonempty.  This implies that there is a process $q$
such that, from time $n_0$ onward, $q\in \S wack/.p$ always holds.

We next note that, as process $p$ is eventually always at line 24, we
have eventually always $\T notify/.p.q = \bot$ and $\T withdraw/.p.q =
0$ and $\T welcome/.p.q\in\{\bot,\S none/\}$.  In fact, once process
$p$ is and remains at line 24, it will not send any messages \T
notify/ or \T withdraw/, or any messages $\T welcome/ \ne \S none/$.
Weak fairness ensures that any such messages are received eventually.
This proves that there is a time $n_1\geq n_0$ such that, from time
$n_1$ onward, we have $\T notify/.p.q = \bot$ and $\T withdraw/.p.q =
0$ and $\T welcome/.p.q\in\{\bot,\S none/\}$.  Now \S Waq0/ implies
that $\neg\,\S dAfter/(p, q)\lor \T ack/.q.p > 0$ holds from time
$n_1$ onward. By weak fairness, process $q$ will send \T ack/ to $p$.
By weak fairness, this \T ack/ will be received, and $p$ will remove
$q$ from $\S wack/.p$.  This contradiction proves:
\begin{tabn} \label{notEt24}
\> $ \S Wf/(p) \cap \Diamond\Box\sem{p\B\ at /24} = \emptyset$ .
\end{tabn}

\subsection{Progress at line 25} \label{progress25}

For some process $p$, assume that \S xs/ is an execution in
$\Diamond\Box\sem{p\B\ at /25}$, weakly fair for $p$.  From some time
$n_0$ onward, process $p$ is and remains at line 25. By the shrinking
argument, there is a process $q$ such that, from time $n_0$ onward,
$q\in \S prio/.p$ always holds.  If $\T withdraw/.q.p > 0$ holds at
some time $n\geq n_0$, by weak fairness this message will eventually
arrive and falsify $q\in \S prio/.p$. Therefore, $\T withdraw/.q.p =
0$ holds from time $n_0$ onward.  By the above argument, there is a
time $n_1\geq n_0$ such that $\T welcome/.p.q\in\{\bot,\S none/\}$
holds from time $n_1$ onward.  By \S Waq1/, $q$ is in $\{26\dots\}$
and $p\bowtie q$ holds from time $n_1$ onward. This proves
\begin{tabn} \label{notEt25}
\> $ \S Wf/(p) \cap \Diamond\Box\sem{p\B\ at /25} \subseteq
\bigcup_q \Diamond\Box\sem{q\B\ in /\{26\dots\} \Land p\bowtie q }$ .
\end{tabn}

\subsection{Progress at line 26} \label{progress26}

For some process $p$, let \S xs/ be an execution in $\Diamond \Box
\sem {p\B\ at /26}$, weakly fair for $p$.  Process $p$ waits at line
26 for emptiness of \S need/. This condition belongs to the inner
protocol.  The inner protocol in isolation, however, is not starvation
free because it would allow a lower process repeatedly to claim
priority over $p$ by sending notifications.  We need the outer
protocol to preclude this.  Technically, the problem is that $\S
need/.p$ can grow at line 26 because of the alternative \T prom/.

We therefore investigate conditions under which the predicate $q\in\S
need/.p$ or its negation is stable, i.e., once true remains true.  As
$p$ remains at line 26, the set $\S nbh0/.p$ remains constant.  For
$q\notin\S nbh0/.p$, we have $q\notin \S need/.p$ by \S Jq0/.  For
$q\in\S nbh0/.p$, we define the predicate
\begin{tab}
\> $ \S bf/(q, p): \quad q\B\ in /\{\dots 24\} \Lor
 \S job/.p * \S job/.q \Lor p\in\S prio/.q $ .
\end{tab} 
Roughly speaking, $ \S bf/(q, p)$ expresses that $q$ is not in the
inner protocol or is not in conflict with $p$.  While process $p$ is
and remains at line 26 and $\S copy/.q(p)\ne\S none/$, the predicate
$\S bf/(q, p)$ is stable. The main point is when process $q$ executes
line 24.  If the second disjunct of $\S bf/(q,p)$ does not
hold, process $q$ puts $p$ into $\S prio/.q$. The proof uses \S Iq2a/,
\S Iq4/, \S Iq7/, \S Iq8/, and \S Rq1a/.

While process $p$ is and remains at line 26 and either $\S bf/(q,p)$
holds or $p < q$, the predicate $q\notin\S need/.p$ is stable. This is
proved at the alternative \T prom/ with \S Iq4/, \S Jq0/, \S Jq1/, \S
Jq2/, and \S Lq2/.

While process $p$ is and remains at line 26 and $\S bf/(q,p)$ is false
and $q\leq p$, the predicate $q\in\S need/.p$ is stable. This is
proved at the alternatives \T gra/ and \T withdraw/ with \S Iq2a/, \S
Jq5/.

We can now combine these predicates in the variant function
\begin{tab}
  \> $ \S vf/(q, p) \IS (\S bf/(q, p)\: ? \quad |\,q\in\S need/.p\,| $ \\
\>\>\>\>\> $: \; q\in\S need/.p\: ?\quad 3 $\\
\>\>\>\>\> $: \; p < q\: ?\quad 2 :\; 4) $ .
\end{tab}
Here, we write $(B\,?\:x:y)$ for the conditional expression that
equals $x$ when $B$ holds, and otherwise $y$, as in the programming
language C.  Clearly, $\S vf/(q, p)$ is odd (1 or 3) if and only if
$q\in\S need/.p$ holds. Similarly, $\S vf/(q, p) \leq 1$ holds if and
only if $\S bf/(q, p)$.

The above stability results about $\S bf/(q, p)$ and $q\in\S need/.p$
and its negation imply that, while $p$ is and remains at line 26 and
$\S copy/.q(p)\ne\S none/$, the function $\S vf/(q, p)$ never
increases.  By weak fairness, eventually $\T notify/.p.q = \bot$
holds.  As process $p$ remains at line 26, $\T notify/.p.q = \bot $
remains valid.  The invariant \S Iq4/ together with \S Iq5/ and \S
Iq1/ therefore implies that $\S copy/.q.(p)\ne\S none/$ holds and
remains valid because $q\in\S nbh0/.p$.  Therefore, from that time
onward, $\S vf/(q, p)$ never increases.

It follows that, for every $q\in\S nbh0/.p$, eventually, $\S vf/(q, p)$
gets a constant value.  Therefore, the truth value of $q\in\S need/.p$
is also eventually constant.  Finally, as $\S need/.p$ is always a
subset of the finite set $\S nbh0/.p$, which is constant while $p$ is
at line 26, we can now conclude that the set $\S need/.p$ is
eventually constant.

If the set $\S need/.p$ is eventually empty, process $p$ would be
eventually always enabled. Weak fairness of $p$ would then imply that
process $p$ would leave line 26. Therefore, there is some process $q$
eventually always in $\S need/.p$. We have $q\ne p$ because of \S Jq0/
and \S Iq0/. This proves
\begin{tabn} \label{evtNeed} 
  \> $ \S Wf/(p) \cap \Diamond\Box\sem{p\B\ at /26} 
  \subseteq \bigcup_{q\ne p} \Diamond\Box \sem{q\in\S need/.p} $ .
\end{tabn}

Assume that $q\in\S need/.p$ holds from time $n_0$ onward.  We now
make a case distinction.  First assume $q < p$.  If $\T withdraw/.q.p
> 0$ holds at some time $n\geq n_0$, weak fairness implies that the
message \T withdraw/ will be received at some time $ > n_0$, which
would falsify $q\in\S need/.p$. Therefore, $\T withdraw/.q.p = 0$
holds from time $n_0$ onward.  By predicate \S Waq2/, we thus have
$q\B\ in /\{26\dots\}$ and $p\bowtie q$ from time $n_0$ onward.  This
proves
\begin{tabn} \label{evtLower} 
  \> $ q < p \Implies \S Wf/(p) \cap 
  \Diamond\Box\sem{q\in \S need/.p}
  \subseteq  \Diamond\Box \sem{q\B\ in /\{26\dots\}\land p\bowtie q} $ .
\end{tabn}

Next, assume $p < q$.  If $\T gra/.q.p > 0$ holds at some time $n\geq
n_0$, this \T gra/ message will be received because of weak fairness,
falsifying $q\in\S need/.p$. Therefore, $\T gra/.q.p = 0$ holds from
time $n_0$ onward. Also by weak fairness, we have $\T notify/.p.q =
\bot $ at some time $n_1 \geq n_0$. Because process $p$ is and remains
at line 26, it cannot send such messages again. Therefore, $\T
notify/.p.q = \bot $ holds from time $n_1$ onward.  Because process
$q$ sends no \T gra/ message to $p$ after $n_0$, weak fairness implies
that $\S dProm/(q, p)$ holds infinitely often after $n_1$.  By \S
Waq3/ this implies that, infinitely often, we have
\begin{tab}
\>  $ \S bg/(q, p): \quad q\B\ in /\{27\dots\} \Land p\in\S nbh/.q 
\Land p\bowtie q $ .
\end{tab}
We need the condition $\S bg/(q, p)$ eternally, however, not only
infinitely often.  For this purpose, we reuse that $\S vf/(q,p)$ is
eventually constant, say that $\S vf/(q,p) = k$ holds from time
$n_2\geq n_1$ onward. As condition $\S bg/(q,p)$ contradicts $\S
bf/(q,p)$, we have $k >1$.  Therefore, $\S bf/(q,p)$ is false from
time $n_2$ onward. This implies that process $q$ does not execute line
28 from time $n_2$ onward. Therefore, condition $\S bg/(q,p)$ holds
eventually always.  Consequently, we obtain
\begin{tabn} \label{evtHigher} 
  \> $ p < q \Implies \S Wf/(p) \cap \Diamond\Box\sem{q\in \S need/.p}
  \subseteq  \Diamond\Box \sem{q\B\ in /\{27\dots\}\land p\bowtie q} $ .
\end{tabn}

\subsection{The proof of Theorem \ref{liveness}} \label{proofLive}

Let $W$ be a nonempty set of processes. Let us introduce the abbreviation
\begin{tab}
\> $\S cf/(W)=\bigcup_{q\in W,r\notin W} \Diamond\Box\sem{ q\bowtie r} $ .
\end{tab} 
Now Theorem \ref{liveness} asserts that $\S Wf/(W)\subseteq
\Diamond\Box\sem{p\B\ at /21}\cup \S cf/(W)$ for every $p\in W$.

We first combine the results of Section \ref{progress26} to prove
\begin{tabn} \label{at26} 
  \> $ p\in W \Implies\S Wf/(W) \cap \Diamond\Box\sem{p\B\ at /26}
  \subseteq \S cf/(W) $ .
\end{tabn}
Let \S xs/ be an execution in the lefthand set. Because $W$ contains
some process that remains eternally at 26, we can consider the lowest
process, say $q\in W$, with this property. By formula (\ref{evtNeed}),
there is a process $r\ne q$ that remains eternally in $\S need/.q$. If
$q < r$, formula (\ref{evtHigher}) implies that $r$ remains eternally
in $\{27\dots\}$ and in conflict with $q$.  By Formula
(\ref{notEt22_27_28}), it follows that $r\notin W$ and hence that $\S
xs/ \in \S cf/(W)$. On the other hand, if $r < q$, Formula
(\ref{evtLower}) implies that $r$ remains eternally in $\{26\dots\}$
and in conflict with $q$.  If $r\in W$, minimality of $q$ implies that
process $r$ proceeds to $\{27,28\}$, contradicting Formula
(\ref{notEt22_27_28}).  Therefore $r\notin W$, and $\S xs/\in\S
cf/(W)$. This concludes the proof of Formula (\ref{at26}).

By similar arguments, we use Formula (\ref{notEt25}) to obtain:
\begin{tabn} \label{at25} 
  \> $ p\in W \Implies\S Wf/(W) \cap \Diamond\Box\sem{p\B\ at /25}
  \subseteq \S cf/(W) $ .
\end{tabn}

Finally, Theorem \ref{liveness} follows from the Formulas (\ref{unlessCen}),
(\ref{notEt22_27_28}), (\ref{notEt23}), (\ref{notEt24}), 
(\ref{at26}), and (\ref{at25}).  

\subsection{Progress for lowering} \label{progressLow}

Formula (\ref{notEt33}) gives progress of the lowering thread at line 33,
under the assumption of weak fairness.  Progress at line 32, however,
requires more than weak fairness.  While the lowering thread of $p$ is
at line 32, the main thread of $p$ can repeatedly enter and leave the
region 22---24.  Therefore, the step of line 32 is not eventually
always enabled.  Indeed, we do not want that resource acquisition
suffers for lowering. 

We need strong fairness for progress at line 32.  Strong fairness at
line 32 means that, if process $p$ is infinitely often enabled at line
32, it will eventually take the step of line 32.  Formally, for a
relation $R\subseteq X^2$, \emph{strong fairness} \cite{Lam94} for $R$
is defined as the set of executions in which $R$ is eventually always
disabled or infinitely often taken:
\begin{tab}
  \> $ \S sf/(R) \IS \S Ex/\cap ( \Diamond\Box \sem{D(R)} \cup
  \Box\Diamond \sem{R}_2 ) $ .
\end{tab}

If, in some execution, process $p$ is always eventually at line 21,
and yet remains at line 32, the step of line 32 is infinitely often
enabled, so that indeed strong fairness guarantees that the step will
be taken.  In combination with formula (\ref{notEt33}), we thus obtain
progress for the lowering thread in the sense that it always returns
to its idle state at line 31:
\begin{tab}
\> $ \Box\Diamond\sem{p\B\ at /21} \cap \S Wf/(p)\cap\S sf/(\S st32/(p))
\subseteq \Box\Diamond\sem{p\B\ at /31} $ ,
\end{tab}
where $\S st32/(p)$ is the relation corresponding to the step of $p$
at line 32.


\section {Message Complexity and Waiting Times} 
\label{summ_alg}

In the central algorithm, a process exchanges 3 or 4 messages with
every neighbour.  In the querying phase, at lines 22 and 23, it
exchanges 2 messages with every site it is interested in, plus 2
messages for every potential competitor.

In a message passing algorithm, we can distinguish two kinds of
waiting.  There is waiting for answers that can be sent immediately.
Such waiting requires at most $2\Delta$, because $\Delta$ is an upper
bound of the time needed for the execution of an alternative plus the
time the messages sent are in transit.  In the central algorithm, this
happens at line 23, for emptiness of \S curlist/, and at line 24, for
emptiness of \S pack/ and \S wack/.  Waiting in the lowering
algorithm, at line 33, is also of this kind.

The other kind of waiting is when a process needs to wait for the
progress of other processes.  These are the important waiting
conditions.  The central algorithm has two locations where this is the
case.  At line 25, the process waits for emptiness of \S prio/ to make
accumulation of conflicting processes unlikely.  At line 26, it waits
for emptiness of \S need/ to ensure partial mutual exclusion.

The waiting time $T_1$ for emptiness of \S need/ at line 26 depends on
the conflict graph of the processes that are concurrently in the inner
protocol. The middle layer tries to keep this graph small by
guaranteeing that conflicting processes do not enter the inner
protocol concurrently unless they pass line 25 within a period
$\Delta$.

The waiting time $T_2$ for emptiness of \S prio/ in line 25 depends on
the efficiency of the inner protocol, because for a process $p$ the
elements of $\S prio/.p$ are in the inner protocol and are removed
from $ \S prio/.p$ when they withdraw.  We thus have $T_2\leq
T_1+\Gamma+\Delta$ where $\Gamma$ is an upper bound for the time spent
in \S CS/.  Indeed, every element of \S prio/ arrives in \S CS/ after
time $T_1$, and at line 28 after $T_1+\Gamma$, while the message \T
withdraw/ takes time $\Delta$.  

The total waiting time for the main loop body is at most $6\Delta +
T_2 + T_1 + \Gamma \leq 2T_1 + 2\Gamma + 7\Delta $.  This includes
$2\Delta$ for concurrent lowering.

The value of $T_1$ heavily depends on the load of the system
and other system parameters: for resource $c$, the number of jobs
activated per $\Gamma$ that need resource $c$; the number of resources
per job; the number of resources per site; the number of processes per
site; the number of conflicts per job.  Of course, all these numbers
should be averages, and they are not independent.  Experiments are
needed to evaluate the performance of the algorithm, and to compare it
with other algorithms.

\section{Conclusions} \label{conclusion}

The problem of distributed resource allocation is a matter of partial
mutual exclusion, with the partiality determined by a dynamic conflict
graph.  Our solution allows infinitely many processes, and it allows
conflicts between every pair of processes.  The primary
disentanglement is the split into the central algorithm and the
registration algorithm.

In the central algorithm, the conflict graph is dynamic but limited by
the current registrations, and the jobs can be treated as
uninterpreted objects with a compatibility relation.  The central
algorithm itself consists of three layers.  In the outer protocol, the
processes communicate their jobs.  In the inner protocol they compete
for the critical section.  The middle layer protects the inner
protocol from flooding with conflicting processes.

The neighbourhoods used in the central algorithm are formed in a
querying phase in which the processes communicate with a finite number
of registration sites.  We use a flexible job model that allows e.g.
the distinction between read permissions and write permissions.  We
reach a fully dynamic conflict graph by enabling the processes to
modify their registrations.

Our solution does not automatically satisfy the ``economy'' condition
of \cite{ChM84} that permanently tranquil philosophers should not send
or receive an infinite number of messages. Indeed, in our algorithm, a
permanently idle process that occurs infinitely often in the
neighbourhood of other processes will receive and send infinitely many
messages.  It can avoid this, however, by resetting its registrations
to zero.

Our solution is more concurrent than the layered solution of
\cite{WeL93}. It satisfies the requirement that, ``if a drinker
requests a set $B$ of bottles, it should eventually enter its critical
region, as long as no other drinker uses or wants any of the bottles
in $B$ forever'' (\cite[p.\ 243]{WeL93}).

Our algorithm does not minimize the response time.  Yet, it may
perform reasonably well in this respect, because the middle layer of
the central algorithm prohibits entrance for new processes that have
known conflicts with processes currently in the inner protocol.  In
the inner protocol, the lower processes have the advantage that they
can force priority over higher conflicting processes.  When conflicts
in the inner protocol are rare, however, this bias towards the lower
processes will not be noticeable.

The algorithm as presented allows several simplifications.  (1) The
aborting commands of Section \ref{s.abort} can be removed.  (2) One
can decide to give every resource its own site, or to use a single
site for all resources.  (3) If one takes the simplest job model,
i.e. $K=1$ in Section \ref{jobmodel}, the arrays \S fun/, \S news/,
and \S list/ reduce to finite sets.  (4) One can fix the network
topology, i.e., replace the variables $\S nbh/.p$ by constants, and
remove the registration algorithm.  (5) If the set of all processes is
a finite and rather small set $\S Proc/$, one can even take $\S nbh/.p
= \S Proc/$ for all $p$.

It would be interesting to see how much the algorithm can be
simplified by using reliable synchronous messages, or how the
algorithm can be made robust by allowing the asynchronous messages to
be lost (and possibly duplicated).

The algorithm could not have been designed without a proof assistant
like PVS.  This holds in particular for the proofs of safety of
registration (the invariants \S Mq*/), the proof of progress of the
central algorithm (Section \ref{progress26}), and the use of array \S
copy/ in the registration algorithm to avoid additional waiting.

\sbreak \B Acknowledgement./ The observation that the algorithm also
solves the readers/writers problem was made by Arnold Meijster.

\bibliographystyle{plain} 
\bibliography{refs}

\end{document}